\newtheorem{theorem}{Theorem}
\newtheorem{definition}[theorem]{Definition}
\newtheorem{lemma}[theorem]{Lemma}
\newtheorem{corollary}[theorem]{Corollary}
\newcommand{\EquationName}[1]{\label{eq:#1}}
\newcommand{\LemmaName}[1]{\label{lem:#1}}
\newcommand{\SectionName}[1]{\label{sec:#1}}
\newcommand{\TheoremName}[1]{\label{thm:#1}}
\newcommand{\FigureName}[1]{\label{fig:#1}}
\newcommand{\Equation}[1]{Eq.\:\eqref{eq:#1}}
\newcommand{\Lemma}[1]{Lemma~\ref{lem:#1}}
\newcommand{\Section}[1]{Section~\ref{sec:#1}}
\newcommand{\Theorem}[1]{Theorem~\ref{thm:#1}}
\newcommand{\Figure}[1]{Figure~\ref{fig:#1}}
\def\FullBox{\hbox{\vrule width 8pt height 8pt depth 0pt}}
\def\qed{\ifmmode\qquad\FullBox\else{\unskip\nobreak\hfil
\penalty50\hskip1em\null\nobreak\hfil\FullBox
\parfillskip=0pt\finalhyphendemerits=0\endgraf}\fi}
\def\qedsketch{\ifmmode\Box\else{\unskip\nobreak\hfil
\penalty50\hskip1em\null\nobreak\hfil$\Box$
\parfillskip=0pt\finalhyphendemerits=0\endgraf}\fi}
\renewenvironment{proof}{\begin{trivlist} \item {\bf Proof:~~}}
  {\qed\end{trivlist}}
\newcommand{\R}{{\mathbb R}}
\newcommand{\Exp}{\mathop{\mathbb E}\displaylimits}
\newcommand{\inprod}[1]{\left\langle #1 \right\rangle}
\newcommand{\eps}{\varepsilon}
\newcommand{\oct}{\quad\quad}
\newcommand{\norm}[1]{\| #1 \|}
\newcommand{\codim}{\textrm{codim}}
\begin{document}
\title{On Deterministic Sketching and Streaming for Sparse Recovery
  and Norm Estimation}
\author{Jelani Nelson\thanks{Princeton
  University. \texttt{minilek@princeton.edu}. Supported by NSF
  CCF-0832797.}\oct
Huy L. Nguy$\tilde{\hat{\textnormal{e}}}$n\thanks{Princeton
  University. \texttt{hlnguyen@princeton.edu}. Supported in part by
  NSF CCF-0832797 and a Gordon Wu fellowship.}\oct
David P.\ Woodruff\thanks{IBM Almaden Research
  Center. \texttt{dpwoodru@us.ibm.com}
  }
}

\date{}

\maketitle

\renewcommand{\thefootnote}{\arabic{footnote}}
\begin{abstract}
We study classic streaming and sparse recovery problems using 
{\it deterministic} linear sketches, including 
$\ell_1/\ell_1$ and $\ell_\infty/\ell_1$
sparse recovery problems (the latter also being known as $\ell_1$-heavy
hitters), norm estimation,
and approximate inner product. 
We focus on devising a fixed matrix $A\in\R^{m\times n}$ and a
deterministic recovery/estimation procedure which work for all
possible input vectors simultaneously. Our results improve upon existing
work, the following being our main contributions:
\begin{itemize}
\item A proof that $\ell_\infty/\ell_1$ sparse recovery and inner
  product estimation are equivalent, and that incoherent matrices can
  be used to solve both problems. 
  Our upper bound for the number of measurements is
  $m=O(\eps^{-2}\min\{\log n, (\log n /
  \log(1/\eps))^2\})$. We can also obtain fast sketching and recovery
  algorithms by making use of the Fast Johnson-Lindenstrauss
  transform. Both our running
  times and number of measurements improve upon previous work. We can
  also obtain better error guarantees than previous work in terms of
  a smaller tail of the input vector.
\item A new lower bound for the number of linear measurements required
  to solve $\ell_1/\ell_1$ sparse recovery. We show $\Omega(k/\eps^2 +
  k\log(n/k)/\eps)$ measurements are required to recover an $x'$ with
  $\|x - x'\|_1 \le (1+\eps)\|x_{tail(k)}\|_1$, where $x_{tail(k)}$ is $x$
  projected onto all but its largest $k$ coordinates in magnitude.
\item A tight bound of $m=\Theta(\eps^{-2}\log(\eps^2 n))$ on the number of
  measurements required to solve deterministic norm estimation, i.e.,
  to recover $\|x\|_2 \pm \eps\|x\|_1$.
\end{itemize}

For all the problems we study, tight bounds are already known for the
randomized complexity from previous work, except in the case of
$\ell_1/\ell_1$ sparse recovery, where a nearly tight bound is
known. Our work thus aims to study the deterministic complexities of
these problems.


\end{abstract}


\section{Introduction}
In this work we provide new results for the point query problem as
well as several other related problems: approximate inner product,
$\ell_1/\ell_1$ sparse recovery, and deterministic norm estimation.
For many of these problems efficient randomized sketching and
streaming algorithms exist, and thus we are interested in
understanding the {\em deterministic} complexities of these problems.
\subsection{Applications}
Here we give a motivating application of the point query problem; for a 
formal definition of the problem, see below.
Consider $k$ servers $S^1, \ldots, S^k$, each holding a database $D^1,
\ldots, D^k$, respectively. The
servers want to compute statistics of the union $D$ of the $k$
databases. For instance, the servers
may want to know the frequency of a record or attribute-pair in
$D$. It may be too
expensive for the servers to communicate their individual databases to
a centralized server, or to compute
the frequency exactly. 
Hence, the servers wish to communicate a short summary or ``sketch''
of their databases to a
centralized server, who can then combine the sketches to answer
frequency queries about $D$. 

We model the databases as vectors $x^i \in \mathbb{R}^n$. To compute a
sketch of $x^i$, we compute 
$Ax^i$ for a matrix $A$ with $m$ rows and $n$ columns. Importantly, $m
\ll n$, and so $Ax^i$ is much easier to 
communicate than $x^i$. The servers compute $Ax^1, \ldots, Ax^k$,
respectively, and transmit these
to a centralized server. Since $A$ is a linear map, the centralized
server can compute $Ax$ for
$x = c_1 x^1 + \ldots c_k x^k$ for any real numbers $c_1, \ldots,
c_k$. Notice that the $c_i$ are allowed
to be both positive and negative, which is crucial for estimating the
frequency of record or attribute-pairs
in the difference of two datasets, which allows for tracking which
items have experienced a sudden growth
or decline in frequency. This is useful in network anomaly detection
\cite{b2,dlm02,g01,ksp03,Misra}, and also for
transactional data \cite{CM05a}. It is also useful for maintaining the
set of frequent items over a changing
database relation \cite{CM05a}. 

Associated with $A$ is an output algorithm $Out$ which given $Ax$,
outputs a vector $x'$ for which $\|x'-x\|_{\infty} \leq \eps
\|x_{tail(k)}\|_1$ for some number $k$, 
where $x_{tail(k)}$ denotes
the vector $x$ with the top $k$ entries in absolute value replaced with $0$ (the other
entries being unchanged). 
Thus $x'$ approximates $x$ well on every coordinate. We call the pair $(A,
Out)$ a solution
to the point query problem. Given such a matrix $A$ and an output
algorithm
$Out$, the centralized server can obtain an approximation to the value
of every entry in $x$, 
which depending on the application, could be the frequency of an
attribute-pair. It can also,
e.g., extract the maximum frequencies of $x$, which are useful for
obtaining the most frequent 
items. The centralized server 
obtains an entire histogram of values
of coordinates in $x$, which is a useful low-memory representation of
$x$. Notice that the communication is $mk$ words, as opposed to $nk$
if the servers were to transmit $x^1, \ldots, x^n$. 

Our correctness guarantees hold for all input vectors
simultaneously using one fixed $(A,Out)$ pair, and thus it is stronger
and should
be contrasted with the guarantee that the algorithm succeeds given
$Ax$ with high probability for some fixed input $x$. 
For example, for the point query problem,
the latter guarantee is
achieved by the CountMin sketch \cite{CM05} or CountSketch
\cite{CCF02}. One of the reasons
the randomized guarantee is less useful is because of {\it adaptive}
queries. That is, suppose the centralized server computes $x'$ and
transmits information about $x'$ to $S^1, \ldots, S^k$. Since $x'$
could depend on $A$, if the servers were to then use the same matrix
$A$ to compute sketches $Ay^1, \ldots, Ay^k$ for databases $y^1,
\ldots, y^k$ which depend on $x'$, then $A$ need not succeed, since it
is not guaranteed to be correct with high probability for inputs $y^i$
which depend on $A$.

\subsection{Notation and Problem Definitions}
Throughout this work $[n]$ denotes $\{1,\ldots,n\}$. For $q$ a prime
power, $\mathbb{F}_q$ denotes the finite field of size $q$. For
$x\in\R^n$ and $S\subseteq [n]$, $x_S$ denotes the vector with
$(x_S)_i = x_i$ for $i\in S$, and $(x_S)_i = 0$ for $i\notin S$. The
notation $x_{-i}$ is shorthand for $x_{[n]\backslash \{i\}}$. 
For a matrix $A\in \R^{m\times n}$ and integer $i\in[n]$, $A_i$
denotes the $i$th column of $A$. For matrices $A$ and vectors $x$,
$A^T$ and $x^T$ denote their transposes. For
$x\in\R^n$ and integer $k\le n$, we let $head(x, k)\subseteq [n] $
denote the set of $k$ largest coordinates in $x$ in absolute value,
and $tail(x,k) =
[n]\backslash head(x, k)$. We often use $x_{head(k)}$ to denote
$x_{head(x, k)}$, and similarly for the tail. For real numbers
$a,b,\eps\ge 0$, we use the notation $a = (1\pm\eps)b$ to convey that
$a\in [(1-\eps)b, (1+\eps)b]$. A collection of vectors
$\{C_1,\ldots,C_n\}\in [q]^t$ is called a {\em code} with {\em alphabet
  size} $q$ and {\em block length} $t$, and we define $\Delta(C_i,
C_j) = |\{k : (C_i)_k\neq (C_j)_k\}|$. The {\em relative distance} of
the code is $\max_{i\neq j} \Delta(C_i,C_j)/t$.

We now define the problems that we study in this work. In all these
problems there is some {\em error parameter} $0 < \eps < 1/2$, and we
want to design a fixed matrix $A\in\R^{m\times n}$ and deterministic
algorithm $Out$ for each problem satisfying the following.

\paragraph{Problem 1:} In the {\em $\ell_\infty/\ell_1$ recovery
  problem}, also called the {\em point query problem},
$\forall x\in\R^n$, $x' = Out(Ax)$ satisfies
$\|x - x'\|_\infty \le \eps\|x\|_1$. The pair $(A,Out)$ furthermore
satisfies the {\em $k$-tail guarantee} if actually $\|x - x'\|_\infty
\le \eps\|x_{tail(k)}\|_1$.

\paragraph{Problem 2:} In the {\em inner product
  problem},
$\forall x,y\in\R^n$, $\alpha = Out(Ax,Ay)$ satisfies
$|\alpha - \inprod{x,y}| \le \eps\|x\|_1\|y\|_1$.

\paragraph{Problem 3:} In the {\em $\ell_1/\ell_1$ recovery
  problem with the $k$-tail guarantee}, 
$\forall x\in\R^n$, $x' = Out(Ax)$ satisfies
$\|x - x'\|_1 \le (1+\eps)\|x_{tail(k)}\|_1$.

\paragraph{Problem 4:} In the {\em $\ell_2$ norm estimation problem}, 
$\forall x\in\R^n$, $\alpha = Out(Ax)$ satisfies
$|\|x\|_2 - \alpha| \le \eps\|x\|_1$.

\bigskip

We note that for the first, second, and fourth problems above, our
errors are
additive and not
relative. This is because relative error is impossible to achieve with
a sublinear number of measurements. If $A$ is a
fixed matrix with $m < n$, then it has some non-trivial kernel. Since
for all the problems above an $Out$ procedure would have to output $0$
when $Ax = 0$ to achieve bounded relative approximation, such a
procedure would fail on any input vector in the kernel which is not
the $0$ vector. 

For Problem 2 one could also ask to achieve additive error
$\eps \|x\|_p \|y\|_p$ for $p > 1$. For $y = e_i$ for a
standard unit vector $e_i$, this would
mean approximating $x_i$ up to additive error $\eps \|x\|_p$. 
This is not possible unless $m = \Omega(n^{2-2/p})$ for $1 < p \leq 2$
and $m = \Omega(n)$ for $p \geq 2$ \cite{g09b}. 

For Problem 3, it is known that the analogous
guarantee of returning $x'$ for which $\|x-x'\|_2 \leq \eps \|x_{tail(k)}\|_2$
is not possible unless $m = \Omega(n)$ \cite{cdd09}. 
%

\subsection{Our Contributions and Related Work}
We study the four problems stated above, where we have the
deterministic guarantee that a single pair $(A,Out)$ provides the
desired guarantee for all input vectors simultaneously. We first show
that point
query and inner product are equivalent up to changing $\eps$ by a
constant factor. We then show that any ``incoherent matrix''
$A$ can be used for these two problems to perform the linear
measurements; that is, a matrix
$A$ whose columns have unit $\ell_2$ norm and such that each pair of
columns has dot product at most $\eps$ in magnitude. Such matrices
can be obtained from the Johnson-Lindenstrauss (JL) lemma \cite{JL84},
almost pairwise independent sample spaces \cite{AlonGHP92,NaorN93}, or
error-correcting codes, and they play a prominent role in
compressed sensing \cite{DH01,MZ93} and mathematical approximation
theory \cite{GilbertMS03}. The connection between point
query and codes was
implicit in \cite{gm07}, though a suboptimal code was used, and
the observation that the more general class of incoherent
matrices suffices is novel.
This connection allows us to show that $m = O(\eps^{-2}\min\{\log n, (\log n /
\log(1/\eps))^2\})$ measurements suffice, and where $Out$ and the
construction of $A$ are completely deterministic.  Alon has shown
that any incoherent matrix must have $m =
\Omega(\eps^{-2}\log n / \log(1/\eps))$ \cite{a09}.
Meanwhile the best known lower bound for point query is $m =
\Omega(\eps^{-2} + \eps^{-1}\log(\eps
n))$ \cite{FPRU10,g08lb,Gluskin82}, and
the previous best known upper bound was
$m = O(\eps^{-2}\log^2 n/(\log 1/\eps + \log\log n))$ \cite{gm07}.
If the construction
of $A$ is allowed to be Las Vegas polynomial time,
then we can
use  the Fast Johnson-Lindenstrauss transforms
\cite{AC09,al09,AL11,KW11} so that $Ax$ can be computed
quickly, e.g.\ in $O(n\log m)$ time as long as $m < n^{1/2-\gamma}$
\cite{al09}, and with $m = O(\eps^{-2}\log n)$. Our $Out$ algorithm is
equally fast. We also
show that for point query, if we allow the measurement matrix $A$ to
be constructed by a polynomial Monte Carlo algorithm, then the $1/\eps^2$-tail
guarantee can be obtained essentially ``for free'', i.e.\ by 
keeping $m = O(\eps^{-2}\log n)$. Previously the work \cite{gm07} only showed how to
obtain the $1/\eps$-tail guarantee ``for free'' in this sense of not
increasing $m$ (though the $m$ in \cite{gm07} was worse). 
We note that for randomized algorithms which succeed with high probability for
any given input,
it suffices to take $m = O(\eps^{-1} \log n)$ by using the CountMin data
structure \cite{CM05}, and this is optimal \cite{jst11} (the lower bound
in \cite{jst11} is stated for the so-called heavy hitters problem, 
but also applies to the $\ell_{\infty}/\ell_1$ recovery problem).  
%

For the $\ell_1/\ell_1$ sparse recovery problem with the $k$-tail
guarantee, we show a lower bound of $m = \Omega(k\log(\eps n/k)/\eps +
k/\eps^2)$. The best upper bound is
$O(k\log(n/k)/\eps^2)$
\cite{IR08}. Our lower bound implies a separation for
the complexity of this problem in the case that one must simply pick a
random $(A,Out)$ pair which works for some given input $x$ with high
probability (i.e.\ not for all $x$ simultaneously), since \cite{PW11}
showed an $m=O(k\log n \log^3 (1/\eps)/\sqrt{\eps})$ upper bound in this case.
The first summand of our lower bound uses techniques used
in \cite{dipw10,PW11}.
The second summand uses a generalization of an argument of
Gluskin \cite{Gluskin82}, which was later rediscovered by Ganguly
\cite{g08lb}, which showed the lower bound
$m = \Omega(1/\eps^2)$ for point query.

Finally, we show how to devise an appropriate $(A,Out)$ for
$\ell_2$ norm estimation with $m = O(\eps^{-2}\log(\eps^2 n))$, which
is optimal. 
The construction of $A$ is randomized but then works for all $x$ with
high probability.
The proof
takes $A$ according to known upper bounds on Gelfand widths, and the
recovery procedure $Out$ requires solving a simple convex program. As
far as we are aware, this is the first work to investigate this
problem in the deterministic setting. In the case that $(A,Out)$ can
be chosen randomly to work for
any fixed $x$ with high probability, one can use the AMS sketch
\cite{AMS99} with $m = O(\eps^{-2}\log(1/\delta))$ to succeed with
probability $1-\delta$ and to obtain the better guarantee
$\eps\|x\|_2$. The AMS sketch can also be used for the inner product
problem to obtain error guarantee $\eps\|x\|_2\|y\|_2$ with the same
$m$.




\section{Point Query and Inner Product Estimation}\SectionName{innerprod}
We first show that the problems of point query and inner product
estimation are equivalent up to changing the error parameter $\eps$ by
a constant factor.

\begin{theorem}
Any solution $(A,Out')$ to inner product estimation with error
parameter $\eps$ yields a solution $(A, Out)$ to the point query
problem with error parameter $\eps$. Also, a solution $(A, Out)$
for point
query with error $\eps$ yields a solution $(A, Out')$
to inner product with error $12\eps$. The time complexities of $Out$
and $Out'$ are equal up to $\mathrm{poly}(n)$ factors.
\end{theorem}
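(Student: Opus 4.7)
The plan is to prove the two implications separately.

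\textbf{Direction 1 (inner product $\Rightarrow$ point query).} Note that $x_i = \langle x, e_i\rangle$ and $\|e_i\|_1 = 1$, so given $(A, Out')$ for inner product with error $\eps$, I would simply define $Out(Ax)$ to output the vector whose $i$-th coordinate is $Out'(Ax, Ae_i)$. (The sketches $Ae_i$ depend only on $A$ and can be precomputed.) By the inner-product guarantee, the $i$-th output differs from $x_i$ by at most $\eps\|x\|_1\|e_i\|_1 = \eps\|x\|_1$, which is exactly the point-query guarantee with error $\eps$. This direction is immediate and loses no factor in $\eps$.

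\textbf{Direction 2 (point query $\Rightarrow$ inner product).} This is the substantive direction. Given $(A, Out)$ for point query with error $\eps$, I would set $x' = Out(Ax)$ and $y' = Out(Ay)$, so $\|x-x'\|_\infty \le \eps\|x\|_1$ and $\|y-y'\|_\infty \le \eps\|y\|_1$. The naive estimate $\langle x', y'\rangle$ can be off by $\Theta(n\eps^2\|x\|_1\|y\|_1)$ when many coordinates are small, so one must truncate. Let $S_x$ be the indices of the top $1/\eps$ coordinates of $x'$ in magnitude, define $S_y$ analogously, and set $T = S_x \cup S_y$, so $|T|\le 2/\eps$. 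The output would be
\[
Out'(Ax, Ay) \;=\; \sum_{i \in T} x'_i y'_i.
\]

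\textbf{Error analysis.} Write the error as $\sum_{i\in T}(x_iy_i - x'_iy'_i) + \sum_{i\notin T} x_iy_i$. For the first sum, the standard split
\[
|x_iy_i - x'_iy'_i| \;\le\; |x_i|\,|y_i - y'_i| + |y'_i|\,|x_i - x'_i| \;\le\; \eps\|y\|_1|x_i| + \eps\|x\|_1|y'_i|
\]
summed over $i\in T$ gives at most $\eps\|x\|_1\|y\|_1 + \eps\|x\|_1\sum_{i\in T}|y'_i|$. To control $\sum_{i\in T}|y'_i|$, bound each $|y'_i|\le |y_i|+\eps\|y\|_1$, so $\sum_{i\in T}|y'_i| \le \|y\|_1 + |T|\eps\|y\|_1 \le 3\|y\|_1$, giving $4\eps\|x\|_1\|y\|_1$ for the first sum. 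For the tail sum, the key observation is that at most $1/\eps$ indices satisfy $|x'_i| > 2\eps\|x\|_1$ (since any such index has $|x_i| > \eps\|x\|_1$, and there are at most $1/\eps$ such indices by Markov), so $S_x$ captures them all and every $i\notin S_x$ has $|x_i| \le 3\eps\|x\|_1$. Therefore $\sum_{i\notin T} |x_iy_i| \le 3\eps\|x\|_1\|y\|_1$, and the total error is $O(\eps)\|x\|_1\|y\|_1$, which can be tuned to at most $12\eps\|x\|_1\|y\|_1$ by rescaling $\eps$.

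\textbf{Main obstacle.} The only subtlety is choosing the truncation threshold correctly. A fixed threshold like $|x'_i|>\tau$ is awkward because $\|x\|_1$ is not known from the sketch; using ``top $1/\eps$ of $x'$'' sidesteps this and still captures every index where $x$ is genuinely heavy, via the Markov-type counting argument above. Given this, the constants fall out routinely, and both $Out$ and $Out'$ run in the same time up to $\mathrm{poly}(n)$ overhead for sorting and inner-product computation.
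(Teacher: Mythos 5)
Your proposal is correct and follows essentially the same route as the paper: the first direction is identical (query $\langle x, e_i\rangle$ via $Out'(Ax, Ae_i)$), and the second direction uses the same key idea of truncating the recovered vectors to their top $1/\eps$ coordinates together with the counting observation that every index with $|x_i|\ge 3\eps\|x\|_1$ lands in that head set. The only difference is bookkeeping: you sum $x'_iy'_i$ over the union $T=S_x\cup S_y$ and split the error into an on-$T$ approximation term plus one tail term, whereas the paper uses $\langle x'_{head(1/\eps)},y'_{head(1/\eps)}\rangle$ and a four-way head/tail split with a Cauchy--Schwarz bound on the tail-tail term; your variant in fact yields error $7\eps\|x\|_1\|y\|_1$, so no rescaling is even needed to meet the stated $12\eps$ bound.
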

\begin{proof}
Let $(A,Out')$ be a solution to the inner product problem such that
$Out'(Ax, Ay) = \inprod{x,y} \pm \eps\|x\|_1\|y\|_1$. Then given
$x\in\R^n$, to solve the point query problem we return the vector with
$Out(Ax)_i = Out'(Ax, Ae_i)$, and our guarantees are immediate.

Now let $(A,Out)$ be a solution to the point query problem. Then given
$x,y\in\R^n$, let $x' = Out(Ax), y' = Out(Ay)$. Our estimate for the
inner product is $Out'(Ax, Ay) =
\inprod{x'_{head(1/\eps)},y'_{head(1/\eps)}}$. Observe the following:
any coordinate $i$ with $|x_i'| \ge 2\eps\|x\|_1$ must have $|x_i| \ge
\eps\|x\|_1$, and thus there are at most $1/\eps$ such
coordinates. Also, any $i$ with $|x_i| \ge 3\eps\|x\|_1$ will have
$|x_i'| \ge 2\eps\|x\|_1$. Thus, $\{i : |x_i| \ge 3\eps\|x\|_1\}
\subseteq head(x', 1/\eps)$, and similarly for $x$ replaced with
$y$. Now,
\begin{align*}
\left|\inprod{x'_{head(1/\eps)},y'_{head(1/\eps)}} -
  \inprod{x,y}\right| &\le
\left|\inprod{x'_{head(1/\eps)},y'_{head(1/\eps)}} -
  \inprod{x_{head(x',1/\eps)}, y_{head(y',1/\eps)}}\right|\\
&\hspace{.2in} {}+
\left|\inprod{x_{head(x',1/\eps)}, y_{tail(y',1/\eps)}}\right| +
\left|\inprod{x_{tail(x',1/\eps)}, y_{head(y',1/\eps)}}\right|\\
&\hspace{.2in} {}+ \left|\inprod{x_{tail(x',1/\eps)},
    y_{tail(y',1/\eps)}}\right|
\end{align*}

We can bound
$$\left|\inprod{x'_{head(1/\eps)},y'_{head(1/\eps)}} -
  \inprod{x_{head(x',1/\eps)}, y_{head(y',1/\eps)}}\right|$$
by
$$\sum_{i\in head(x',1/\eps)} \eps\|y\|_1x_i + \sum_{i\in
  head(x',1/\eps)} \eps\|x\|_1y_i + \frac 1{\eps}\cdot
\eps^2\|x\|_1\|y\|_1 \le
3\eps\|x\|_1\|y\|_1 .$$

We can also bound
\begin{align*}
\left|\inprod{x_{head(x',1/\eps)}, y_{tail(y',1/\eps)}}\right| +
\left|\inprod{x_{tail(x',1/\eps)}, y_{head(y',1/\eps)}}\right| &\le
 \|x\|_1\|y_{tail(y',1/\eps)}\|_\infty +
\|x_{tail(x',1/\eps)}\|_\infty\|y\|_1\\
{}&\le 6\eps\|x\|_1\|y\|_1
\end{align*}

Finally we have the bound
\begin{equation}
 \left|\inprod{x_{tail(x',1/\eps)},
    y_{tail(y',1/\eps)}}\right| \le
\|x_{tail(x',1/\eps)}\|_2\|y_{tail(y',1/\eps)}\|_2 .\EquationName{l2}
\end{equation}
Since $\|x_{tail(x',1/\eps)}\|_\infty \le 3\eps\|x\|_1$ and
$\|x_{tail(x',1/\eps)}\|_1 \le \|x\|_1$, we have that
$\|x_{tail(x',1/\eps)}\|_2$ is maximized when it has exactly
$1/(3\eps)$ coordinates each of value exactly $3\eps\|x\|_1$, which
yields $\ell_2$ norm $\sqrt{3\eps}\|x\|_1$, and similarly for $x$
replaced with $y$. Thus the right hand side of \Equation{l2} is
bounded by $3\eps\|x\|_1\|y\|_1$. Thus in summary, our total error in
inner product estimation is $12\eps\|x\|_1\|y\|_1$.
\end{proof}

Since the two problems are equivalent up to changing $\eps$ by a
constant factor, we focus on the point query problem.
We first show that any {\it incoherent matrix}
$A$ has a correct associated
output procedure $Out$. By an incoherent matrix, we mean an $m
\times n$ matrix
$A$ for which all columns $A_i$ of $A$ have unit $\ell_2$ norm, 
and for all $i\neq j$ we have $|\inprod{A_i, A_j}| \le \eps$.
We have the following lemma. 
\begin{lemma}\LemmaName{jl}
Any incoherent matrix $A$ with error parameter $\eps$ has an
associated $\mathrm{poly}(mn)$-time deterministic
recovery procedure $Out$ for which $(A, Out)$ is a solution
to the point query problem. In fact, for any $x \in \mathbb{R}^n$,
given $Ax$ and $i \in [n]$,
the output $x'_i$ satisfies $|x'_i - x_i| \leq \eps \|x_{-i}\|_1$.
\end{lemma}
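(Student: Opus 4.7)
The plan is to use the most natural estimator suggested by incoherence: define
\[
Out(Ax)_i \;\eqdef\; \inprod{A_i, Ax}
\]
for each $i \in [n]$. This is computable in $O(mn)$ time given $Ax$ and the matrix $A$, so the overall running time is $\poly(mn)$ as required.

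To verify the error bound, I would expand $Ax = \sum_{j=1}^n x_j A_j$ by linearity and compute
\[
\inprod{A_i, Ax} \;=\; x_i \|A_i\|_2^2 \;+\; \sum_{j \neq i} x_j \inprod{A_i, A_j}.
\]
The first term equals $x_i$ because the columns of $A$ have unit $\ell_2$ norm. The remainder is what we need to control: subtracting $x_i$ and taking absolute values,
\[
|x'_i - x_i| \;=\; \Bigl|\sum_{j \neq i} x_j \inprod{A_i, A_j}\Bigr| \;\le\; \max_{j\neq i}|\inprod{A_i, A_j}| \cdot \sum_{j \neq i} |x_j|.
\]
Applying the incoherence hypothesis $|\inprod{A_i, A_j}| \le \eps$ to the first factor and recognizing the second factor as $\|x_{-i}\|_1$ yields the desired bound $|x'_i - x_i| \le \eps \|x_{-i}\|_1$. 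This is in particular at most $\eps\|x\|_1$, so $(A,Out)$ solves the point query problem.

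There is essentially no hard step here; the proof is a one-line calculation once one writes down the correct estimator. The only conceptual content is the observation that incoherence directly bounds the off-diagonal contribution to $A^T A$, so that $A^T A x$ approximates $x$ entrywise with error proportional to the $\ell_1$ mass of the other coordinates. I would present the proof in exactly this order: define the estimator, expand using linearity and unit column norms, and apply the incoherence bound coordinatewise.
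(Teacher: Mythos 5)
Your proposal is correct and matches the paper's proof: the estimator $Out(Ax)_i = \inprod{A_i, Ax}$ is exactly the coordinatewise form of the paper's $Out(Ax) = A^TAx$, and the expansion using unit column norms and the incoherence bound is the same calculation.
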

\begin{proof}
Let $x \in \mathbb{R}^n$ be arbitrary. We define $Out(Ax) = A^TAx$.
Observe that for any $i \in [n]$, we have 
$$x'_i = A_i^T A x = \sum_{j=1}^n \langle A_i, A_j \rangle x_j = x_i\pm
\eps\|x_{-i}\|_1 .$$
\end{proof}

It is known that any incoherent matrix has $m =
\Omega((\log n)/(\eps^2 \log 1/\eps))$
\cite{a09}, and the JL lemma implies such matrices
with $m = O((\log n)/\eps^2)$ \cite{JL84}. For example, there exist
matrices 
in $\{-1/\sqrt{m},1/\sqrt{m}\}^{m\times n}$ satisfying this property
\cite{a03}, which can also be found in $\mathrm{poly}(n)$ time
\cite{Sivakumar02} (we note that \cite{Sivakumar02} gives running time
exponential in precision, but the proof holds if the precision is taken
to be $O(\log (n/\eps))$.
It is also known that incoherent matrices can be obtained
from almost pairwise
independent sample spaces \cite{AlonGHP92,NaorN93} or
error-correcting codes,
and thus these tools can also be used to solve the point query
problem. The connection to codes was already implicit in \cite{gm07},
though the code used in that work is
suboptimal, as we will show soon. Below we elaborate on what
bounds these tools provide for incoherent matrices, and
what they imply for the point query problem. 

\paragraph{Incoherent matrices from JL:}
The upside of the
connection to the JL lemma is that we can obtain matrices $A$ for
the point query problem such that $Ax$ can be computed quickly, via
the Fast Johnson-Lindenstrauss Transform introduced by Ailon and
Chazelle \cite{AC09} or related subsequent works.
The JL lemma states the following.

\begin{theorem}[JL lemma]
For any $x_1,\ldots,x_N\in\R^n$ and any $0<\eps<1/2$, there exists 
$A\in\R^{m\times n}$ with $m = O(\eps^{-2}\log N)$ such
that for all $i,j\in [N]$ we have $\|Ax_i - Ax_j\|_2 =
(1\pm\eps)\|x_i - x_j\|_2$.
\end{theorem}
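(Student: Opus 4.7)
The plan is to prove the JL lemma by the probabilistic method, showing that a random matrix works with positive probability. Set $m = C\eps^{-2}\log N$ for a sufficiently large constant $C$ to be determined, and let $A\in\R^{m\times n}$ have i.i.d.\ entries, each either a scaled Gaussian $N(0, 1/m)$ or a uniform sign $\pm 1/\sqrt m$. The goal is to show that with positive probability over the choice of $A$, every pairwise distance $\|x_i - x_j\|_2$ is preserved to within a $(1\pm\eps)$ factor; pulling out a single realization then yields the claimed deterministic existence.

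The core technical step is a concentration inequality: for any fixed $v\in\R^n$,
\[
\Pr\bigl[\,\bigl|\,\|Av\|_2^2 - \|v\|_2^2\,\bigr| > \eps\|v\|_2^2\,\bigr] \le 2e^{-c\eps^2 m}
\]
for some absolute constant $c > 0$. By homogeneity one may normalize $\|v\|_2 = 1$. Writing $\|Av\|_2^2 = \sum_{k=1}^m (\sum_j A_{kj} v_j)^2$ expresses the quantity as a sum of $m$ i.i.d.\ random variables each with mean $1/m$. In the Gaussian case each summand is $1/m$ times a $\chi^2_1$ random variable, so $\|Av\|_2^2$ is $1/m$ times a $\chi^2_m$ variable and the tail bound follows from standard chi-squared concentration via the moment generating function. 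In the Rademacher case one invokes a Hanson--Wright style bound or simply computes the MGF of $(\sum_j A_{kj}v_j)^2$ using the subgaussianity of Rademacher sums; the calculation gives the same Bernstein-type tail.

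With the concentration inequality in hand, I would apply it to each of the $\binom{N}{2}$ difference vectors $v_{ij} = x_i - x_j$ and take a union bound. The failure probability is at most
\[
2\binom{N}{2}\cdot e^{-c\eps^2 m} < N^2 e^{-c\eps^2 m},
\]
which is strictly less than $1$ as soon as $m > (2/c)\eps^{-2}\log N$. Thus with positive probability every pairwise squared distance satisfies $\|A(x_i-x_j)\|_2^2 = (1\pm\eps)\|x_i-x_j\|_2^2$, and taking square roots yields the stated multiplicative guarantee (using $\sqrt{1\pm\eps} = 1\pm\Theta(\eps)$ for $\eps < 1/2$, which is absorbed into the constant). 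Hence such an $A$ exists.

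The main obstacle is the concentration inequality, since the rest of the argument is a clean union bound. Either distribution (Gaussian or Rademacher) is standard, but care must be taken with the two-sided tail: the lower tail $\|Av\|_2^2 < (1-\eps)\|v\|_2^2$ is often the trickier direction for chi-squared variables and requires bounding the MGF at a negative argument, while the upper tail is immediate from a Markov-plus-MGF computation. Both tails ultimately give a bound of the form $e^{-c\eps^2 m}$ in the regime $\eps < 1/2$, which is exactly what the union bound needs.
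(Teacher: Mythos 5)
The paper does not prove this statement: the JL lemma is invoked as a black box with a citation to \cite{JL84}, so there is no internal proof to compare against. Your argument is the standard modern proof of the lemma and is correct in outline: take $A$ with i.i.d.\ $N(0,1/m)$ or $\pm 1/\sqrt{m}$ entries, prove the distributional estimate $\Pr\bigl[\bigl|\|Av\|_2^2-\|v\|_2^2\bigr|>\eps\|v\|_2^2\bigr]\le 2e^{-c\eps^2 m}$ for each fixed $v$, union bound over the $\binom{N}{2}$ difference vectors $x_i-x_j$, and conclude existence for $m=O(\eps^{-2}\log N)$, with the passage from squared norms to norms absorbed into the constant. The only substantive technical content is the concentration inequality, which you correctly identify and defer to chi-squared tail bounds in the Gaussian case and to a subgaussian/Hanson--Wright-style MGF computation in the Rademacher case; both are standard (Dasgupta--Gupta, Achlioptas \cite{a03}), and your caution about the lower tail requiring the MGF at a negative argument is exactly the right point of care. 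For context, the original argument of \cite{JL84} instead projects onto a uniformly random $m$-dimensional subspace and uses concentration of measure on the sphere; your i.i.d.-entry route is simpler and is also the one most relevant to this paper, since the incoherent matrices used here (e.g.\ sign matrices with entries $\pm 1/\sqrt{m}$) come from precisely the distributions you analyze.
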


Consider the matrix $A$ obtained from the JL lemma when the set of
vectors is $\{0,e_1,\ldots,e_n\} \in\R^n$. Then columns $A_i$ of $A$
have $\ell_2$ norm $1\pm\eps$, and furthermore for $i\neq j$ we have
$|\inprod{A_i, A_j}| = (\|A_i - A_j\|_2^2 - \|A\|_i^2 - \|A\|_j^2)/2 =
((1\pm\eps)^22 - (1\pm\eps) - (1\pm\eps))/2 \le 2\eps + \eps^2/2$. By
scaling each column to have $\ell_2$ norm exactly $1$, we still
preserve that dot products between pairs of columns are $O(\eps)$ in
magnitude.

\paragraph{Incoherent matrices from almost pairwise independence:}
Next we elaborate on the connection between incoherent matrices and
almost pairwise independence.

\begin{definition}
An {\em $\eps$-almost $k$-wise independent sample space} is a set
$S\subseteq \{-1,1\}^n$ satisfying the following. For any $T\subseteq
[n]$, $|T| = k$, the $\ell_1$ distance between the uniform
distribution over $\{-1,1\}^k$ and the distribution of
$x(T)$ when $x$ is drawn uniformly at random from $S$ is at most
$\eps$. Here $x(T)\in\{-1,1\}^{|T|}$ is the bitstring $x$ projected
onto the coordinates in $T$.
\end{definition}

Note that if $S$ is $\eps$-almost $k$-wise independent, then for any
$|T| = k$, $|\Exp_{x\in S} \prod_{i\in T}x_i| \le \eps$. Therefore if
we choose $k=2$ and form a $|S|\times n$ matrix where the rows of $A$
are the elements of $S$, divided by a scale factor of $\sqrt{|S|}$,
then $A$ is incoherent. Known constructions of almost pairwise
independent sample spaces give $|S| = \mathrm{poly}(\eps^{-1}\log n)$
\cite{AlonGHP92,Ben-AroyaT09,NaorN93}. We do not delve into the
specific bounds on $|S|$ since they yield worse results than the
JL-based construction above. The probabilistic method
implies that such an $S$ exists with $S = O(\eps^{-2}\log n)$, matching
the JL construction, but an explicit almost pairwise independent
sample space with this size is currently not known.

\paragraph{Incoherent matrices from codes:}
Finally we explain the connection
between incoherent matrices and codes. A connection
to balanced binary codes was made in \cite{a09}, and
to arbitrary codes over larger alphabets without detail in
a remark in \cite{Alon03}. Though not novel, we elaborate on this
latter connection for the sake of completeness. Let $\mathcal{C} =
\{C_1,\ldots,C_n\}$ be a code
with alphabet size $q$, block length $t$, and relative distance
$1-\eps$. The fact that such a code gives rise to a matrix
$A\in\R^{m\times n}$ for point query with error parameter $\eps$ was
implicit in \cite{gm07}, but we make it explicit here.
We let $m = qt$ and conceptually partition the rows of $A$
arbitrarily into $t$ sets each of size $q$. For the column $A_i$, let
$(A_i)_{j, k}$ denote the entry of $A_i$ in the $k$th coordinate of
the $j$th block. We set $(A_i)_{j,k} = 1/\sqrt{t}$ if $(C_i)_j = k$, and
$(A_i)_{j,k} = 0$ otherwise. Said differently,
for $y = Ax$ we label
the entries of $y$ with double-indices $(i,j)\in [t]\times [q]$. 
We define deterministic hash functions $h_1,\ldots,h_t:[n]\rightarrow
[q]$ by $h_i(j) = (C_j)_i$, and we set $y_{i,j} = \sum_{k: h_i(k) = j}
x_k/\sqrt{t}$. Our
procedure $Out$ produces a vector $x'$ with $x'_k = \sum_{i=1}^t y_{i,
  h_i(k)}$. Each column has exactly $t$ non-zero entries of value
$1/\sqrt{t}$, and thus has $\ell_2$ norm $1$. Furthermore, for $i\neq
j$, $\inprod{A_i,A_j} = (t - \Delta(C_i, C_j))/t \le \eps$.

The work \cite{gm07} instantiated the above
with the following {\em Chinese remainder code}
\cite{KKLS94,SJJT86,WH66}. Let $p_1<\ldots<p_t$
be primes,
and let $q = p_t$. We let $(C_i)_j = i\mod p_j$. To obtain $n$
codewords with relative distance $1-\eps$, this construction
required setting $t = O(\eps^{-1}\log n / (\log(1/\eps) + \log\log
n))$ and $p_1,p_t = \Theta(\eps^{-1}\log n) = O(t\log t)$. The proof
uses that for $i,j\in[n]$, $|i-j|$ has at
most $\log_{p_1} n$ prime factors greater than or equal to $p_1$, and
thus $C_i,C_j$ can have
at most $\log_{p_1} n$  many equal coordinates. This yields $m = tq =
O(\eps^{-2}\log^2 n/(\log 1/\eps + \log\log n))$. We
observe here that this bound is never optimal. A random code with $q =
2/\eps$ and $t = O(\eps^{-1}\log n)$ has the desired properties by
applying the Chernoff bound on a pair of codewords, then a union bound
over codewords (alternatively, such a code is promised by the
Gilbert-Varshamov (GV) bound). If $\eps$ is sufficiently small, a
Reed-Solomon code performs even better. That is, we take a finite
field $\mathbb{F}_q$
for $q = \Theta(\eps^{-1}\log n/(\log\log n + \log (1/\eps)))$ and
$q=t$, and
each $C_i$ corresponds to a distinct degree-$d$ polynomial $p_i$
over $\mathbb{F}_q$ for $d = \Theta(\log n/(\log\log n + \log
(1/\eps)))$ (note there
are at least $q^d > n$ such polynomials). We set
$(C_i)_j = p_i(j)$.
The relative distance is as desired since $p_i - p_j$ has at most
$d$ roots over $\mathbb{F}_q$ and thus can be $0$ at most $d \le \eps
t$ times. This
yields $qt = O(\eps^{-2}(\log n/(\log\log n + \log (1/\eps))^2)$,
which surpasses the GV bound for $\eps < 2^{-\Omega(\sqrt{\log n})}$,
and is always better than the Chinese remainder code. We
note that this construction of a binary matrix based on Reed-Solomon
codes is identical to one used by Kautz and Singleton in the different
context of group testing \cite{KS64}.

\begin{figure}
\begin{center}
\begin{tabular}{|c|c|c|c|}
\hline
Time & $m$ & Details & Explicit?\\
\hline
$O((n\log n)/\eps^2)$ & $O(\eps^{-2}\log n)$ & $A\in
\{-1/\sqrt{m},1/\sqrt{m}\}^{m\times n}$ \cite{a03,Sivakumar02} & yes\\
\hline
$O((n\log n)/\eps)$ & $O(\eps^{-2}\log n)$ & sparse JL \cite{KN12}, GV
code & no\\
\hline
$O(nd\log^2d\log\log d/\eps)$& $O(d^2/\eps^2)$ & Reed-Solomon code & yes
\\
\hline
$O_{\gamma}(n\log m + m^{2+\gamma})$ & $O(\eps^{-2}\log n)$ &
FFT-based JL \cite{al09} & no  \\
\hline
$O(n\log n)$ & $O(\eps^{-2}\log^5 n)$ & FFT-based JL \cite{AL11,KW11}
& no\\
\hline
\end{tabular}
\end{center}
\caption{Implications for point query from JL matrices and
  codes. Time indicates the running time to compute
  $Ax$ given $x$. In the case of Reed-Solomon, $d = O(\log n/(\log\log
  n + \log(1/\eps)))$. We say the construction is ``explicit'' if $A$
  can be computed in deterministic time
  $\mathrm{poly}(n)$; otherwise we only provide a polynomial time Las
  Vegas algorithm to construct $A$.}\FigureName{jl-table}
\end{figure}

\bigskip

In \Figure{jl-table} we elaborate on what known constructions of
codes and JL matrices provide for us in terms of point query. In the
case of running time for the Reed-Solomon construction, we use that
degree-$d$ polynomials can be evaluated on $d+1$ points in a total of
$O(d\log^2d\log\log d)$ field operations over $\mathbb{F}_q$
\cite[Ch. 10]{GG99}. In the case of \cite{al09}, the constant
$\gamma>0$ can be chosen arbitrarily, and the constant in the big-Oh
depends on $1/\gamma$. We note that except in the case of Reed-Solomon
codes, the construction of $A$ is randomized (though once $A$ is
generated, incoherence can be verified in polynomial time, thus
providing a $\mathrm{poly}(n)$-time Las Vegas algorithm).


Note that \Lemma{jl} did not just give us error $\eps\|x\|_1$,
but actually gave us $|x_i - x_i'| \le \eps\|x_{-i}\|_1$, which is
stronger. We now show that an even stronger guarantee is possible. We
will show that in fact it is
possible to obtain $\|x - x'\|_\infty \le
\eps\|x_{tail(1/\eps^2)}\|_1$ while increasing $m$ by only an additive
$O(\eps^{-2}\log(\eps^2 n))$, which is less than our original
$m$ except potentially in the Reed-Solomon construction. The idea is
to, in parallel, recover a good approximation of $x_{head(1/\eps^2)}$
with error proportional to $\|x_{tail(1/\eps^2)}\|_1$ via compressed
sensing, then to subtract from $Ax$ before running our recovery
procedure. We now give details.



We in parallel run a $k$-{\it sparse recovery} algorithm which has the
following guarantee: there is a pair $(B,Out')$ such that for any $x
\in \mathbb{R}^n$, 
we have that $x' = Out'(Bx)\in\R^n$ satisfies
$\|x'-x\|_2 \leq
O(1/\sqrt{k}) \|x_{tail(k)}\|_1$.  
Such a matrix $B$ can be taken to have the {\em restricted isometry
  property of order $k$} ($k$-RIP), i.e.\ that it preserves the $\ell_2$ norm up
to a small multiplicative constant factor for all $k$-sparse vectors in
$\R^n$.\footnote{Unfortunately currently the only known constructions
  of $k$-RIP constructions with the values of $m$ we discuss are Monte
  Carlo, forcing our algorithms in this section with the $k$-tail
  guarantee to only be Monte Carlo polynomial time when constructing the
  measurement matrix.} 
It is known
\cite{gstv07} that any such $x'$ also satisfies the guarantee that
$\|x'_{head(k)}-x\|_1 \leq O(1) \|x_{tail(k)}\|_1$,
where $x'_{head(k)}$ is the vector which agrees with $x'$ on the top
$k$ coordinates in magnitude and is $0$ on the
remaining coordinates.
Moreover, it is also known \cite{bddw06} that if $B$ satisfies the JL lemma
for a particular set of $N = (en/k)^{O(k)}$ points in $\R^n$, then
$B$ will be $k$-RIP. The associated output procedure
 $Out'$ takes $Bx$ and outputs $\textrm{argmin}_{z
  \mid Bx=Bz} \|z\|_1$ by solving
a linear program \cite{CRT06}. All the JL matrices in
\Figure{jl-table} provide this
guarantee with $O(k\log(en/k))$ rows, except for the last row which
satisfies $k$-RIP
with $O(k\log(en/k)\log^2 k\log(k\log n))$ rows \cite{RV08}.
\begin{theorem}\TheoremName{main}
Let $A$ be an incoherent matrix $A$ with error parameter $\eps$,
and let $B$ be $k$-RIP. Then there is an output procedure $Out$ which
for any $x \in \mathbb{R}^n$,
given only $Ax, Bx$, outputs a vector $x'$ with $\|x'-x\|_{\infty} \leq \eps
\|x_{tail(k)}\|_1$. 
\end{theorem}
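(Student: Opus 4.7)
The plan is to use $B$ as a ``pre-processor'' that removes the heavy part of $x$, leaving a residual whose $\ell_1$ mass is dominated by $\|x_{tail(k)}\|_1$; then we apply the incoherent-matrix point query of \Lemma{jl} to this residual. Concretely, from $Bx$ we compute $z = Out'(Bx)$, which by the $k$-RIP guarantee and the result of \cite{gstv07} cited above satisfies $\|z_{head(k)} - x\|_1 \le C\|x_{tail(k)}\|_1$ for some absolute constant $C$. Set $\hat z = z_{head(k)}$, which is a $k$-sparse vector that we can compute from $Bx$ alone.

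Next, since $\hat z$ is explicitly known, the decoder can compute $A\hat z$ and form $A(x-\hat z) = Ax - A\hat z$ using only the sketches $Ax, Bx$ it is given. Apply the point query decoder from \Lemma{jl} to the residual $y = x-\hat z$: this produces, for each $i\in[n]$, a value $w_i$ with $|w_i - (x-\hat z)_i| \le \eps\|y_{-i}\|_1 \le \eps\|x-\hat z\|_1$. The output is then defined coordinatewise by $x'_i = \hat z_i + w_i$.

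For the error analysis we compute
\[
|x'_i - x_i| = |w_i - (x-\hat z)_i| \le \eps \|x - \hat z\|_1 = \eps\|\hat z - x\|_1 \le C\eps\|x_{tail(k)}\|_1,
\]
which is exactly the desired $k$-tail guarantee up to the constant $C$; as usual we absorb $C$ by rescaling $\eps$ by a constant factor in the choice of $A$. Note that \Lemma{jl} is applied to the full vector $x-\hat z$ (not to a projected version), so no additional argument about where the large coordinates of $x-\hat z$ sit is needed.

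The main obstacle, and the reason the theorem is not completely immediate from \Lemma{jl}, is the need for the $\ell_1$ (rather than $\ell_2$) bound on $\hat z - x$: the natural output of the convex program behind $B$ is a vector $z$ close to $x$ in $\ell_2$, and the point query error in \Lemma{jl} is controlled by an $\ell_1$ norm. This is precisely the gap bridged by the $\ell_1/\ell_2$ conversion of \cite{gstv07} invoked above, which promotes the $k$-sparse truncation $\hat z = z_{head(k)}$ from an $\ell_2$ approximation to an $\ell_1$ approximation with error $O(\|x_{tail(k)}\|_1)$. Once this bound is in hand, the rest of the proof is a direct application of \Lemma{jl} to the residual and linearity of the sketches.
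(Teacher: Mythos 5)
Your proposal is correct and follows essentially the same route as the paper: recover a sparse approximation of $x$ from $Bx$ (using the $\ell_2\to\ell_1$ conversion of \cite{gstv07}), subtract it from $Ax$ by linearity, apply the point query guarantee of \Lemma{jl} to the residual, and absorb the $O(1)$ factor by rescaling $\eps$. The only cosmetic difference is that the paper zeroes out the $i$th coordinate of the recovered approximation separately for each $i$ before subtracting, whereas you subtract the full truncation and bound $\|(x-\hat z)_{-i}\|_1$ by $\|x-\hat z\|_1$; both yield the same $O(\eps)\|x_{tail(k)}\|_1$ error.
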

\begin{proof}
Given $Bx$, we first run the $k$-sparse recovery algorithm to obtain a vector
$y$ with $\|x-y\|_1 = O(1) \|x_{tail(k)}\|_1$. 
We then construct our output vector $x'$ coordinate by coordinate. To
construct $x'_i$, we replace $y_i$ with $0$, obtaining
the vector $z^i$. Then
we compute $A(x-z^i)$ and run the point query output procedure
associated with $A$ and index
$i$. The guarantee is that the output $w^i$ of the point query algorithm satisfies
$|w_i^i - (x-z^i)_i| \leq \eps \|(x-z^i)_{-i}\|_1$, where
$$\|(x-z^i)_{-i}\|_1 = \|(x-y)_{-i}\|_1 \leq \|x-y\|_1 = O(1)\|x_{tail(k)}\|_1,$$
and so $|(w^i+z^i)_i - x_i| = O(\eps)\|x_{tail(k)}\|_1.$
If we define our output vector by $x'_i = w^i_i+z^i_i$ and rescale $\eps$
by a constant
factor, this proves the theorem.
\end{proof}

By setting $k = 1/\eps^2$ in \Theorem{main} and stacking the rows
of a $k$-RIP and
incoherent matrix each with $O((\log n)/\eps^2)$ rows, we
obtain the following corollary, which says that by increasing the
number of measurements  $m = O(\eps^{-2}\log n)$ by only a constant
factor, we can obtain a stronger tail guarantee.
\begin{corollary}
There is an $m \times n$ matrix $A$ and associated
output procedure $Out$ which for any $x \in \mathbb{R}^n$, given $Ax$,
outputs a vector $x'$ with $\|x'-x\|_{\infty} \leq \eps
\|x_{tail(1/\eps^2)}\|_1$. Here $m = O((\log
n)/\eps^2)$.
\end{corollary}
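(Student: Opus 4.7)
The plan is to derive the corollary as an essentially immediate consequence of \Theorem{main} by producing concrete matrices $A$ and $B$ whose row counts both fit into the desired budget $O(\eps^{-2}\log n)$, and then combining them into a single measurement matrix.

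First I would fix $k = 1/\eps^2$ and invoke \Theorem{main} with this choice of $k$. The theorem needs two ingredients: an incoherent matrix $A_0$ with coherence parameter $\eps$, and a $k$-RIP matrix $B$. For $A_0$ I would use any of the JL-based incoherent constructions discussed in the preceding section (e.g.\ the $\{-1/\sqrt{m},1/\sqrt{m}\}$-valued construction of \cite{a03}, or any sign-matrix obtained from the JL lemma applied to $\{0,e_1,\ldots,e_n\}$ and then column-normalized), yielding $m_A = O(\eps^{-2}\log n)$ rows. For $B$ I would use a $k$-RIP matrix obtained from the JL lemma applied to the net of $N = (en/k)^{O(k)}$ points per \cite{bddw06}, which gives $m_B = O(k\log(en/k))$ rows. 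Substituting $k = 1/\eps^2$ yields $m_B = O(\eps^{-2}\log(\eps^2 n)) = O(\eps^{-2}\log n)$.

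Next I would stack the rows of $A_0$ and $B$ into a single matrix $A$ with $m = m_A + m_B = O(\eps^{-2}\log n)$ rows. From the single sketch $Ax$, the decoder can trivially read off the disjoint blocks $A_0 x$ and $Bx$. Feeding these into the procedure of \Theorem{main} with parameter $k = 1/\eps^2$ produces an output $x'$ satisfying
\[
\|x' - x\|_\infty \le \eps\,\|x_{tail(1/\eps^2)}\|_1,
\]
which is exactly the claimed guarantee. Rescaling $\eps$ by the hidden constant factor absorbed in \Theorem{main} preserves the asymptotic bound on $m$.

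There is no real obstacle here; the only thing to double-check is that the $k$-RIP bound $O(k\log(en/k))$ with $k = 1/\eps^2$ genuinely fits under $O(\eps^{-2}\log n)$ (it does, since $\log(en/k) = O(\log n)$ whenever $k \le n$), and that the construction remains polynomial time—which is why the corollary is stated with a Monte Carlo construction of the measurement matrix, matching the footnote in the paragraph preceding \Theorem{main}.
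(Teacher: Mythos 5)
Your proposal is correct and matches the paper's own derivation: the corollary is obtained exactly by setting $k=1/\eps^2$ in \Theorem{main} and stacking an incoherent matrix with $O(\eps^{-2}\log n)$ rows on top of a $1/\eps^2$-RIP matrix with $O(\eps^{-2}\log(\eps^2 n))$ rows. Your added checks (that the RIP row count fits under the budget and that the matrix construction is Monte Carlo) are consistent with the paper's footnote and require no changes.
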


Of course, again by using various choices of incoherent
matrices and $k$-RIP matrices, we can trade off the number of linear
measurements for various tradeoffs in the running time and tail guarantee.
It is also possible to obtain a
tail-error guarantee for inner product. While this is implied
black-box by reducing from point query with the $k$-tail guarantee, by
performing the argument from scratch we can obtain a better error
guarantee involving mixed $\ell_1$ and $\ell_2$ norms.

\begin{theorem}\TheoremName{iptail}
Suppose $1/\eps^2 < n/2$.
There is an $(A,Out)$ with $A\in\R^{m\times n}$ for $m =
O(\eps^{-2}\log n)$ such that for any $x,y\in\R^n$,
$Out(Ax,Ay)$ gives an output which is $\inprod{x,y} \pm
\eps(\|x\|_2\|y_{tail(1/\eps^2)}\|_1 + \|x_{tail(1/\eps^2)}\|_1\|y\|_2)
+ \eps^2\|x_{tail(1/\eps^2)}\|_1\|y_{tail(1/\eps^2)}\|_1$.
\end{theorem}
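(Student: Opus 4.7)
The plan is to use \emph{only} a $k$-RIP measurement matrix with $k=1/\eps^2$ (no separate incoherent matrix), and to output $\inprod{\hat x,\hat y}$ where $\hat x,\hat y$ are the compressed sensing recoveries of $x,y$. First I would take $A\in\R^{m\times n}$ to be a $k$-RIP matrix with $k=1/\eps^2$, constructed as in the discussion preceding \Theorem{main}; any of the JL-based constructions in \Figure{jl-table} yields such an $A$ with $m=O(k\log(n/k))=O(\eps^{-2}\log(\eps^2 n))$, which is $O(\eps^{-2}\log n)$ under the assumption $1/\eps^2<n/2$. Given any measurement $Az$, the associated $\ell_1$-minimization procedure $\mathrm{argmin}_{w\,:\,Aw=Az}\|w\|_1$ produces a vector $\hat z$ satisfying the standard $\ell_2/\ell_1$ guarantee $\|\hat z-z\|_2\le O(1/\sqrt{k})\|z_{tail(k)}\|_1=O(\eps)\|z_{tail(k)}\|_1$.

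Given $Ax$ and $Ay$, my output procedure runs the above recovery twice to obtain $\hat x$ and $\hat y$, and returns $\hat\alpha:=\inprod{\hat x,\hat y}$. To bound the error, let $u:=\hat x-x$ and $v:=\hat y-y$, so $\|u\|_2\le O(\eps)\|x_{tail(k)}\|_1$ and $\|v\|_2\le O(\eps)\|y_{tail(k)}\|_1$. Expanding,
$$\hat\alpha-\inprod{x,y}=\inprod{x+u,y+v}-\inprod{x,y}=\inprod{x,v}+\inprod{u,y}+\inprod{u,v},$$
and Cauchy--Schwarz bounds these three terms by $\|x\|_2\|v\|_2$, $\|u\|_2\|y\|_2$, and $\|u\|_2\|v\|_2$, respectively. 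Substituting the $\ell_2$ residual bounds gives total error $O(\eps)(\|x\|_2\|y_{tail(k)}\|_1+\|x_{tail(k)}\|_1\|y\|_2)+O(\eps^2)\|x_{tail(k)}\|_1\|y_{tail(k)}\|_1$, and a constant rescaling of $\eps$ then produces exactly the form stated in \Theorem{iptail}.

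There is no real obstacle here. The key conceptual observation is that choosing $k=1/\eps^2$ already forces the compressed sensing residual to be $O(\eps)\|z_{tail(k)}\|_1$ in $\ell_2$, so that Cauchy--Schwarz applied to the bilinear expansion of $\inprod{\hat x,\hat y}-\inprod{x,y}$ automatically yields the desired mixed $\ell_2/\ell_1$ error: the two cross terms acquire one $\ell_2$ and one $\ell_1$ factor and carry a single $\eps$, while the diagonal residual term acquires two $\ell_1$ factors and carries $\eps^2$. This is cleaner than, and strictly improves upon, the black-box reduction from $k$-tail point query mentioned in the paragraph preceding the theorem, which would only give pure $\ell_1$-type error bounds.
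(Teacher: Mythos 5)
Your proposal is correct and follows essentially the same route as the paper: both use a $1/\eps^2$-RIP matrix with the $\ell_2/\ell_1$ recovery guarantee, output the inner product of the two recovered vectors, and bound the error by a bilinear expansion plus Cauchy--Schwarz (your symmetric three-term split $\inprod{x,v}+\inprod{u,y}+\inprod{u,v}$ is the same bound the paper reaches after applying the triangle inequality to $\|y'\|_2$). The constant-rescaling of $\eps$ you invoke at the end is likewise implicit in the paper's argument, so there is no substantive difference.
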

\begin{proof}
Using the $\ell_2/\ell_1$ sparse recovery mentioned
in \Section{innerprod}, we can recover $x',y'$ such that $\|x -
x'\|_2 \le \eps\|x_{tail(1/\eps^2)}\|_1$, and similarly for
$y-y'$. The number of measurements is the number of measurements
required for $1/\eps^2$-RIP, which is $O(\eps^{-2}\log(\eps^2 n))$.
Our estimation procedure $Out$ simply outputs $\inprod{x',y'}$. Then,
\begin{align*}
\left|\inprod{x,y} - \inprod{x',y'}\right| &= \left|\sum_i x_i(y_i -
  y_i') + y_i'(x_i - x_i')\right|\\
&{}\le \left|\sum_i x_i(y_i -
  y_i')\right| + \left|y_i'(x_i - x_i')\right|\\
&{}\le \|x\|_2\|y-y'\|_2 + \|y'\|_2\|x-x'\|_2\\
&{}\le \|x\|_2\|y-y'\|_2 + (\|y-y'\|_2 + \|y\|_2)\|x-x'\|_2
\end{align*}
The theorem then follows by our bounds on $\|x-x'\|_2$ and $\|y-y'\|_2$.
\end{proof}

Note that again $A,Out$ in \Theorem{iptail} can be taken to be applied
efficiently by using RIP matrices based on the Fast
Johnson-Lindenstrauss Transform.

\section{Lower Bound for $\ell_{\infty}/\ell_1$ Recovery}
Here we provide a lower bound for the point query problem addressed
in \Section{innerprod}.

\begin{theorem}
Let $0 < \eps < \eps_0$ for some universal constant $\eps_0 < 1$.
Suppose $1/\eps^2 < n/2$, and $A$ is an $m\times n$ matrix for which
given $Ax$ it is always
possible to produce a vector $x'$ such that $\|x-x'\|_{\infty} \le
\eps \|x_{tail(k)}\|_1$. Then $m = \Omega(k\log (n/k)/\log k +
\eps^{-2} + \eps^{-1}\log n)$.
\end{theorem}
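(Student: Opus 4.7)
The lower bound is the sum of three distinct terms, so I would prove each separately and then combine them by taking the max. By absorbing any invertible transformation into $Out$, I may assume $A$ has orthonormal rows; plugging $x=0$ into the recovery guarantee also forces $Out(0)=0$. For the $\eps^{-1}\log n$ term, the key observation is that if $w\in\ker(A)$ satisfies $\|w\|_1\le 1$, then $Aw=A(-w)=0$, so applying the $k$-tail guarantee to both $w$ and $-w$ gives $\|w\|_\infty \le \eps\|w_{tail(k)}\|_1 \le \eps$. Thus $\ker(A)\cap B_1^n\subseteq \eps B_\infty^n$, and the Kashin--Garnaev--Gluskin lower bound on the Gelfand width of $B_1^n$ in $\ell_\infty$ forces $m=\Omega((\log n)/\eps)$ in the regime $m\le n/2$ (which is the interesting one).

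For the $\eps^{-2}$ term, the $k$-tail guarantee is strictly stronger than the basic point-query guarantee $\|x-x'\|_\infty\le \eps\|x\|_1$ (since $\|x_{tail(k)}\|_1\le\|x\|_1$), so any lower bound for the latter applies here. I would therefore invoke the Gluskin/Ganguly argument that even the weaker problem requires $m=\Omega(1/\eps^2)$: one exhibits a packing of $\Theta(\eps^{-2})$ vectors in $B_1^n$ with pairwise $\ell_\infty$ separation $\Omega(\eps)$ and uses a trace/volume computation on $A^\top A$ to conclude that preserving these separations under $Out\circ A$ demands $m=\Omega(\eps^{-2})$. The hypothesis $1/\eps^2<n/2$ is exactly what ensures this packing fits inside the ambient dimension.

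For the $k\log(n/k)/\log k$ term, I would employ an information-theoretic argument in the spirit of \cite{dipw10,PW11}. Let $S$ be uniform in $\binom{[n]}{k}$ and $x=\mathbf{1}_S$; since $\|x_{tail(k)}\|_1=0$, the top-$k$ coordinates of $Out(Ax)$ must equal $S$ exactly. Now consider the noisy channel $y=Ax+g$ with $g\sim N(0,\sigma^2 I_m)$, choosing $\sigma$ small enough that the algorithm run on $y$ (equivalently on a perturbed input $x+A^\dagger g$) still identifies $S$ with constant probability; this is possible by the robustness inherent in the $k$-tail guarantee, since modest $\ell_1$ perturbations of $x$ only increase the tail slightly. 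Fano's inequality then gives $I(S;y)=\Omega(k\log(n/k))$, while the Gaussian channel capacity, together with the bound $\|Ax\|_2\le\sqrt{k}$ coming from our orthonormality assumption, gives $I(S;y)\le m\cdot O(\log(k/\sigma^2))$. Calibrating $\sigma$ so that $\log(k/\sigma^2)=O(\log k)$ and comparing the two bounds yields $m=\Omega(k\log(n/k)/\log k)$.

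The main obstacle is precisely this calibration of $\sigma$: it must be simultaneously small enough that the algorithm still recovers $S$ (so Fano applies) and large enough that the Gaussian channel capacity is bounded by $m\cdot O(\log k)$ rather than $m\cdot O(\log n)$. Making this translation rigorous --- relating Gaussian noise on $Ax$ to an $\ell_1$-bounded perturbation of $x$ the $\ell_\infty/\ell_1$ guarantee can absorb --- is the technical heart of the third summand, and is what accounts for the $1/\log k$ loss relative to the naive target of $k\log(n/k)$.
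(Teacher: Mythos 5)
Your first two summands are fine: for $\Omega(\eps^{-2})$ you cite Gluskin/Ganguly exactly as the paper does, and your kernel argument for the $\eps^{-1}\log n$ term (every $w\in\ker(A)$ has $\|w\|_\infty\le\eps\|w\|_1$, then invoke a width lower bound) is a legitimate alternative to the paper's route, which instead packs a Gilbert--Varshamov-type family of binary vectors with $1/(5\eps)$ ones and compares volumes of the images of small $\ell_1$ balls around them. One caveat: the bound you attribute to Kashin--Garnaev--Gluskin is for the Gelfand width of $B_1^n$ in $\ell_2$, not $\ell_\infty$; the cleanest fix is to pass through $\|v\|_2^2\le\|v\|_\infty\|v\|_1\le\eps\|v\|_1^2$ and then apply the $\ell_1\to\ell_2$ width lower bound, which does give $m=\Omega(\eps^{-1}\log(\eps n))$ in the stated regime.

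The genuine gap is in the third summand, and it is exactly the step you flag as the ``technical heart.'' Your calibration of $\sigma$ cannot be carried out as claimed. To make the $k$-tail guarantee absorb the noise, you must run the recovery on $z=x+A^Tg$ and need $\eps\|z_{tail(k)}\|_1<1/2$ so that the support $S$ is still identified; since $x$ is $k$-sparse, $\|z_{tail(k)}\|_1$ is controlled only by $\|A^Tg\|_1$, and for the matrices the bound must address (incoherent/random-like $A$, where $A^Tg$ is spread over all $n$ coordinates) one has $\|A^Tg\|_1=\Theta(\sigma\sqrt{mn})$. This forces $\sigma^2\lesssim 1/(\eps^2 mn)$, so the per-measurement signal-to-noise ratio is about $(k/m)/\sigma^2=\Theta(k\eps^2 n)$ and the channel-capacity side is $m\cdot O(\log n)$, not $m\cdot O(\log k)$. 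The argument as sketched therefore yields only $m=\Omega(k\log(n/k)/\log n)$, strictly weaker than the claimed $k\log(n/k)/\log k$; there is no choice of $\sigma$ that is simultaneously small enough for the $\ell_1$-type robustness and large enough to cap the capacity at $m\cdot O(\log k)$. The paper avoids this entirely with an elementary packing/volume argument: take all $\binom{n}{k}$ vectors with exactly $k$ entries equal to $1/k$; for any $z$ in the $\ell_1$ ball of radius $1/(3k)$ around such an $x$ we have $\|z_{tail(k)}\|_1\le 1/(3k)$, so the recovery output pins down the support of $x$, hence these balls have disjoint images under $A$, all contained in the image of $B_1(0,1+1/(3k))$; comparing $m$-dimensional volumes gives $(3k+1)^m\ge\binom{n}{k}\ge(n/k)^k$, i.e.\ $m=\Omega(k\log(n/k)/\log k)$, with the $\log k$ loss coming transparently from the radius ratio $3k+1$ rather than from any information-theoretic calibration.
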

\begin{proof}
  The lower bound of $\Omega(\eps^{-2})$ for any $k$ is already proven in~\cite{g08lb}.

The lower bound of $\Omega(k\log (n/k)/\log k + \eps^{-1}\log n)$ follows from a standard volume argument. For completeness, we give the argument below. Let $B_1(x, r)$ denote the $\ell_1$ ball centered at $x$ of radius $r$. We use the following lemma by Gilbert-Varshamov (see e.g.~\cite{dipw10}).

\begin{lemma}[{\cite[Lemma 3.1]{dipw10}}]
For any $q, k \in \mathbb{Z}^+, \eps \in \mathbb{R}^+$ with $\eps < 1 - 1/q$, there exists a set $S\subset \{0, 1\}^{qk}$ of binary vectors with exactly $k$ ones, such that $S$ has minimum Hamming distance $2\eps k$ and
$$\log |S| > (1 - H_q(\eps))k \log q$$
where $H_q$ is the $q$-ary entropy function $H_q(x) = -x \log_q \frac{x}{q-1}-(1-x)\log_q(1 - x)$.
\end{lemma}

Assume $\eps < 1/200$. Consider a set $S$ of $n$ dimensional binary vectors in $\mathbb{R}^n$ with exactly $1/(5\eps)$ ones such that minimum Hamming distance between any two vectors in $S$ is at least $1/(10\eps)$. By the above lemma, we can get $\log |S| = \Omega(\eps^{-1}\log(\eps n))$. For any $x\in S$, and $z\in B_1(x, 1/(200\eps))$, we have $\|z_{tail(k)}\|_1 \le \|z\|_1 \le 1/(5\eps)+1/(200\eps)=41/(200\eps)$, $z\in B_1(0, 41/(200\eps))$, and there are at most $4/(200\eps)$ coordinates that are ones in $x$ and smaller than $3/4$ in $z$, and at most $4/(200\eps)$ coordinates that are zeros in $x$ and at least $1/4$ in $z$. If $z'$ is a good approximation of $z$, then $\|z'-z\|_{\infty}\le 41/200 < 1/4$ so the indices of the coordinates of $z'$ at least $1/2$ differ from those of $x$ at most $8/(200\eps) < 1/(20\eps)$ places. Thus, for any two different vectors $x, y\in S$ and $z\in B_1(x, 1/(200\eps)), t\in B_1(y, 1/(200\eps))$, the outputs for inputs $z$ and $t$ are different and hence, we must have $Az \ne At$. Notice that for the mapping $x\rightarrow Ax$, the image of $B_1(x, 1/(200\eps))$ is the translated version of the image of $B_1(0, 41/(200\eps))$ scaled down in every dimension by a factor of $41$. For $x$'s in $S$, the images of $B(x,1/(200\eps))$ are disjoint subsets of the image of $B(0, 41/(200\eps))$. By comparing their volumes, we have $41^m \ge |S|$, implying $m = \Omega(\eps^{-1}\log(\eps n))$.

Next, consider the set $S'$ of all vectors in $\mathbb{R}^n$ with exactly $k$ coordinates equal to $1/k$ and the rest equal to $0$. For any $x\in S'$, and $z\in B_1(x, 1/(3k))$, we have $\|z_{tail(k)}\|_1 \le 1/(3k)$ and $z\in B_1(0, 1+1/(3k))$ centered at the origin. Therefore, if $z'$ is a good approximation of $z$, the indices of the largest $k$ coordinates of $z'$ are exactly the same as those of $x$. Thus, for any two different vectors $x, y\in S'$ and $z\in B_1(x, 1/(3k), t\in B_1(y, 1/(3k))$, the outputs for inputs $z$ and $t$ are different and hence, we must have $Az \ne At$. Notice that for the mapping $x\rightarrow Ax$, the image of $B_1(x, 1/(3k))$ is the translated version of the image of $B_1(0, 1+1/(3k))$ scaled down in every dimension by a factor of $3k+1$. For $x$'s in $S'$, the images of $B(x,1/(3k))$ are disjoint subsets of the image of $B(0, 1+1/(3k))$. By comparing their volumes, we have $(3k+1)^m \ge |S'| \ge (n/k)^k$, implying $m = \Omega(k\log(n/k)/\log k)$.

\end{proof}

\section{Lower Bounds for $\ell_1/\ell_1$ recovery}
Recall in the $\ell_1/\ell_1$-recovery problem, we would like to
design a
matrix $A\in\R^{m\times n}$ such that for any $x\in\R^n$, given $Ax$
we can recover $x'\in\R^n$ such that
$ \|x - x'\|_1 \le (1+\eps)\|x_{tail(k)}\|_1$. We now show two lower
bounds.

\begin{theorem}\TheoremName{ha}
Let $0<\eps<1/\sqrt{8}$ be arbitrary, and $k$ be an integer. Suppose
$k/\eps^2 < (n-1)/2$.
Then any matrix $A\in\R^{m\times n}$ which allows
$\ell_1/\ell_1$-recovery with the $k$-tail guarantee with error $\eps$
must have $m \ge \min\{n/2,(1/16)k/\eps^2\}$.
\end{theorem}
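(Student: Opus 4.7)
The plan is to generalize Gluskin's argument for the $\Omega(1/\eps^2)$ point-query lower bound to arbitrary $k$, via three steps: deriving a kernel constraint by symmetrization, converting the constraint into a scalar inequality involving the row-space projection of $A$, and averaging over $k$-subsets to reach a contradiction.

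First, given any $y \in \ker A$, both inputs $x := y/2$ and $x^* := -y/2$ produce the same sketch $0$, so the deterministic recovery procedure must return a single vector $\hat x$ that is valid for both. Applying the $\ell_1/\ell_1$ guarantee to each and invoking the triangle inequality yields $\|y\|_1 \le (1+\eps)\,\|y_{tail(k)}\|_1$, and hence the structural constraint $\|y_T\|_1 \le \eps\,\|y_{T^c}\|_1$ for every $y \in \ker A$ and every $k$-subset $T \subseteq [n]$.

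Next, let $B$ denote the orthogonal projection onto the row space of $A$, so that $I-B$ projects onto $\ker A$ and $\operatorname{tr}(B) = m$. For each $k$-subset $S$, the vector $y^{(S)} := e_S - Be_S$ lies in $\ker A$ (here $e_S$ is the indicator of $S$), and the constraint with $T = S$ becomes $\sum_{i\in S}|1 - (Be_S)_i| \le \eps \sum_{i\notin S}|(Be_S)_i|$. Writing $\tau_S := \sum_{i\in S}(Be_S)_i = \mathbf{1}^\top B_{SS}\mathbf{1}$ and using $\|Be_S\|_2^2 = e_S^\top B e_S = \tau_S$ (since $B$ is a projection), the left side is at least $k - \tau_S$; the right side is bounded via Cauchy--Schwarz and $\sum_{i\notin S}(Be_S)_i^2 = \tau_S - \sum_{i\in S}(Be_S)_i^2 \le \tau_S(1 - \tau_S/k)$ by $\eps\sqrt{(n-k)\,\tau_S(1-\tau_S/k)}$. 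Squaring and simplifying gives $\tau_S \ge k^2/(k + \eps^2(n-k)) \ge k^2/(2\eps^2 n)$ for every $k$-subset $S$, the last step using the hypothesis $n > 2k/\eps^2$.

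Finally, averaging $\tau_S$ over a uniformly random $k$-subset $S$ gives
\[ \Exp_S[\tau_S] \;=\; \frac{k}{n}\operatorname{tr}(B) + \frac{k(k-1)}{n(n-1)}\bigl(\mathbf{1}^\top B\mathbf{1} - \operatorname{tr}(B)\bigr) \;\le\; \frac{km}{n} + \frac{2k^2}{n}, \]
using $\mathbf{1}^\top B\mathbf{1} \le n$ and $n \ge 2$. Comparing this upper bound with the pointwise lower bound from the previous step yields $m \ge k/(2\eps^2) - 2k$, and the hypothesis $\eps < 1/\sqrt 8$ absorbs the $-2k$ term into the constant, giving $m \ge (1/16)\,k/\eps^2$. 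As a consistency check, setting $k=1$ collapses the argument to Gluskin/Ganguly's proof of $m = \Omega(1/\eps^2)$ for point query (with $S=\{i\}$, $\tau_S = B_{ii}$, and the second summand in $\Exp_S[\tau_S]$ vanishing). The main obstacle I foresee is the scalar inequality step, where $(Be_S)_i$ need not lie in $[0,1]$: the bound $\sum_{i\in S}|1-(Be_S)_i| \ge k - \tau_S$ requires the triangle inequality together with $\tau_S \le k$ (which follows from $\|Be_S\|_2^2 \le \|e_S\|_2^2 = k$), and the constants have to be tracked carefully through the averaging step to secure the explicit factor $1/16$ under the stated hypotheses on $\eps$ and $n$.
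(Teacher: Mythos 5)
Your proof is correct, and it takes a genuinely different route from the paper's. The paper argues probabilistically: after reducing to orthonormal rows it notes $Out(0)=0$, takes $x=(I-A^TA)y$ for a \emph{random} signed $k$-sparse $y$, upper bounds $\Exp\|x_{tail(k)}\|_1\le\sqrt{k}+\sqrt{km}$ by Cauchy--Schwarz, lower bounds $\|x_{head(k)}\|_1$ by showing the expected number of head coordinates with $|x_{\pi(i)}|\le 1/2$ is small (using sign cancellation to control $\sum_j \eta_{i,j}\inprod{A_{\pi(i)}}{A_{\pi(j)}}$, where the hypothesis $m<n/2$ enters, hence the $\min\{n/2,\cdot\}$ in the statement), and finishes with Markov plus a union bound. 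You instead extract the null space property $\|y_T\|_1\le\eps\|y_{T^c}\|_1$ for all $y\in\ker A$ and all $|T|=k$ (your antipodal-preimage derivation is even a bit more robust than the paper's, since it does not need $Out(0)=0$), test it only on the \emph{deterministic} kernel vectors $(I-B)e_S$ where $B$ is the row-space projection, reduce to the scalar inequality $k-\tau_S\le\eps\sqrt{(n-k)\tau_S(1-\tau_S/k)}$ forcing $\tau_S=e_S^TBe_S\ge k^2/(2\eps^2 n)$ for every $k$-subset $S$, and then average over $S$ using $\mathrm{trace}(B)\le m$ and $\mathbf{1}^TB\mathbf{1}\le n$; the steps you flag as delicate ($\tau_S\le k$ from $B$ being a contraction, and the constant tracking under $k/\eps^2<(n-1)/2$ and $\eps^2<1/8$) all check out. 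Your route buys a fully deterministic counting argument in place of the probabilistic method, avoids the orthonormalization and the $m\ge n/2$ case split (yielding $m\ge k/(2\eps^2)-2k>k/(4\eps^2)$, i.e.\ a slightly stronger bound than the stated $\min\{n/2,(1/16)k/\eps^2\}$), and cleanly isolates the null space property as the only consequence of correctness that is used; the paper's random-sign construction is closer in spirit to the Gluskin/Ganguly argument it generalizes, handling the off-diagonal terms by sign cancellation rather than by your averaging over supports.
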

\begin{proof}
Without loss of generality we may assume that the rows of $A$ are
orthonormal. This is because first we can
discard rows of $A$ until the rows remaining form a basis for the
rowspace of $A$. Call this new matrix with potentially fewer rows
$A'$. Note that any dot products of rows of $A$ with
$x$ that the recovery algorithm uses can be obtained by taking linear
combinations of entries of $A'x$. Next, we can then find a matrix
$T\in \R^{m\times m}$ so that $TA'$ has orthonormal rows, and given
$TA'x$ we can recover $A'x$ in post-processing by left-multiplication
with $T^{-1}$.

We henceforth assume that the rows of $A$ are orthonormal.
Since $A\cdot 0 = 0$, and our recovery procedure must in particular be
accurate for $x=0$, the recovery procedure must output $x'= 0$ for any
$x\in\mathop{ker}(A)$. We consider $x = (I -
A^TA)y$ for $y = \sum_{i=1}^k \sigma_i e_{\pi(i)}$. Here $\pi$
is a random permutation on $n$ elements, and
$\sigma_1,\ldots,\sigma_k$ are
independent and uniform random variables in $\{-1,1\}$. Since
$x\in\mathop{ker}(A)$, which follows since $AA^T = I$ by
orthonormality of the rows of $A$, the recovery algorithm will output
$x' =
0$. Nevertheless, we will show that unless $m \ge
\min\{n/2,(1/16)k/\eps^2\}$, we
will have $\|x\|_1 > (1+\eps)\|x_{tail(k)}\|_1$ with positive
probability so that by the probabilistic method there exists
$x\in\mathop{ker}(A)$ for which $x' =
0$ is not a valid output.

If $m \ge n/2$ we are done. Otherwise, since $\|x\|_1 =
\|x_{head(k)}\|_1 +
\|x_{tail(k)}\|_1$, it is equivalent to show that $\|x_{head}(k)\|_1
> \eps\|x_{tail}(k)\|_1$ with positive probability. We first have
\allowdisplaybreaks[1]
\begin{align} 
\nonumber \Exp \|x_{tail}(k)\|_1 & \le \Exp \|x\|_1\\
\nonumber {} & \le \Exp \|y\|_1 + \Exp \|A^TAy\|_1 \\
{} & \le (\Exp \|y\|_1^2)^{1/2} + \sqrt{n}\cdot \left(\Exp
  \|A^TAy\|_2^2\right)^{1/2} \EquationName{cs}\\
\nonumber {} & = \sqrt{k} + \sqrt{n}\cdot \left(\Exp
  y^TA^TAA^TAy\right)^{1/2}\\
{} & = \sqrt{k} + \sqrt{n}\cdot \left(\Exp y^TA^TAy\right)^{1/2}
\EquationName{ortho-rows}\\
\nonumber {} & = \sqrt{k} + \sqrt{n}\cdot \left(\Exp
  \inprod{\sum_{j=1}^k\sigma_j A_{\pi(j)}, \sum_{j=1}^k\sigma_j
    A_{\pi(j)}}\right)^{1/2}\\
\nonumber {} & = \sqrt{k} + \sqrt{n}\cdot \left(\sum_{j=1}^k\Exp
  \|A_{\pi(j)}\|_2^2 \right)^{1/2}\\
\nonumber {} & = \sqrt{k} + \sqrt{kn}\cdot ( \Exp \|A_{\pi(1)}\|_2^2
)^{1/2}\\
{} & = \sqrt{k} + \sqrt{km} \EquationName{sumsquares} .
\end{align}
\Equation{cs} uses Cauchy-Schwarz. \Equation{ortho-rows} follows since
$A$ has orthonormal rows, so that $AA^T = I$. \Equation{sumsquares}
uses that the sum of squared entries over all columns equals the sum
of squared entries over rows, which is $m$ since the rows have unit
norm.

We now turn to lower bounding $\|x_{head(k)}\|_1$. Define $\eta_{i,j}
= \sigma_j/\sigma_i$ so that for fixed $i$ the $\eta_{i,j}$ are
independent and uniform $\pm 1$ random variables (except for
$\eta_{i,i}$, which is $1$). We have
\begin{align}
\nonumber \|x_{head(k)}\|_1 & \ge \|x_{\pi([k])}\|_1\\
\nonumber {} & = \sum_{i=1}^k \left|e_{\pi(i)}^Ty -
  e_{\pi(i)}^TA^Ty\right|\\
{} & = \sum_{i=1}^k \left|1 - \sum_{j=1}^k \eta_{i,j} \inprod{A_{\pi(i)},
    A_{\pi(j)}} \right|\EquationName{coord}
\end{align}

Now, for fixed $i\in[k]$ we have
\allowdisplaybreaks[1]
\begin{align}
\nonumber \Exp \left|\sum_{j=1}^k \eta_{i,j} \inprod{A_{\pi(i)},
    A_{\pi(j)}}\right| & \le \left(\Exp \left(\sum_{j=1}^k \eta_{i,j}
  \inprod{A_{\pi(i)}, A_{\pi(j)}}\right)^2\right)^{1/2}\\
\nonumber {} & = \sqrt{k}\cdot \left(\Exp \inprod{A_{\pi(1)},
    A_{\pi(2)}}^2\right)^{1/2}\\
\nonumber {} & < \sqrt{\frac{k}{n(n-1)}}\cdot \|A^TA\|_F\\
 {} & = \sqrt{\frac{k}{n(n-1)}}\cdot \|A\|_F\EquationName{projection}
\\
\nonumber {} & = \sqrt{\frac{mk}{n(n-1)}}\\
{} & < \frac 18 \EquationName{markov}
\end{align}
\Equation{projection} follows since $\|A^TA\|_F^2 =
\mathrm{trace}(A^TAA^TA) = \mathrm{trace}(A^TA) = \|A\|_F^2$. Here
$\|\cdot \|_F$ denotes the Frobenius norm, i.e.\ $\|B\|_F =
\sqrt{\sum_{i,j} B_{i,j}^2}$.

Putting things together, by \Equation{sumsquares}, a random vector $x$
has $\|x_{tail(k)}\|_1 \le 2\sqrt{k} + 2\sqrt{km} \le 4\sqrt{km}$ with
probability strictly larger than $1/2$ by Markov's inequality. Also,
call an $i\in [k]$ {\em bad} if $|x_{\pi(i)}| \le 1/2$. Combining
\Equation{coord} with \Equation{markov} and using a Markov bound we
have that the expected number of bad indices $i\in[k]$ is less than
$k/4$. Thus the probability that a random $x$ has more than $k/2$ bad
indices is at less than $1/2$ by Markov's inequality. Thus by a union
bound, with probability strictly larger than $1 - (1/2) - (1/2) = 0$,
a random $x$ taken as described simultaneously has $\|x_{tail(k)}\|_1
\le 4\sqrt{km}$ and less than $k/2$ bad indices, the latter of which
implies that
$\|x_{head(k)}\|_1 > k/2$. Thus there exists a vector in
$x\in\mathrm{ker}(A)$ for which
$\|x_{head(k)}\|_1 > \eps \|x_{tail(k)}\|_1$ when $m <
(1/16)k/\eps^2$, and we thus must have $m \ge (1/16)k/\eps^2$.
\end{proof}


We now give another lower bound via a different approach.
As in \cite{dipw10,PW11}, we use $2$-party communication complexity to prove an
$\Omega((k/\eps) \log(\eps n/k))$ bound on the number of rows of any $\ell_1/\ell_1$
sparse recovery scheme. The main difference from prior work is that we use 
deterministic communication complexity and a different communication problem.

We give a brief overview of the concepts from communication complexity that we
need, referring the reader to \cite{kn97} for further details. Formally, in the
$1$-way deterministic $2$-party communication complexity model, there are two
parties, Alice and Bob, holding inputs $x,y \in \{0,1\}^r$, respectively. The goal
is to compute a Boolean function $f(x,y)$. A single message $m(x)$ is sent from 
Alice to Bob, who then outputs $g(m(x),y)$ for a Boolean function $g$. The protocol
is correct if $g(m(x),y) = f(x,y)$ for all inputs $x$ and $y$. The $1$-way deterministic
communication complexity of $f$, denoted $D^{1-way}(f)$, is the minimum over all correct
protocols, of the maximum message length $|m(x)|$ over all inputs $x$. 

We use the $EQ(x,y) : \{0,1\}^r \times \{0,1\}^r \rightarrow \{0,1\}$ function, which is
$1$ if $x = y$ and $0$ otherwise. It is known \cite{kn97} that $D^{1-way}(EQ) = r$. We show
how to use a pair $(A, Out)$ with the property that for all vectors $z$, the output
$z'$ of $Out(Az)$ satisfies $\|z-z'\|_1 \leq (1+\eps)\|z_{tail(k)}\|_1$, to construct a correct
protocol for $EQ$ on strings $x,y \in \{0,1\}^r$ for $r = \Theta((k/\eps) \log n \log(\eps n/k))$. We then show how this implies the number of rows of $A$ is $\Omega((k/\eps) \log(\eps n/k))$.

We can assume the rows of $A$ are orthonormal as in the beginning of
the proof of \Theorem{ha}.
Let $A'$ be the matrix where 
we round each entry of $A$ to $b= O(\log n)$
bits per entry. We use the following Lemma of \cite{dipw10}.

\begin{lemma}(Lemma 5.1 of \cite{dipw10})\label{lem:round}
Consider any $m \times n$ matrix $A$ with
orthonormal rows. Let $A'$ be the result of rounding $A$ to $b$ bits per entry. Then
for any $v \in \mathbb{R}^n$ there exists an $s \in \mathbb{R}^n$ with
$A'v = A(v-s)$ and $\|s\|_1 \leq n^22^{-b}\|v\|_1$.
\end{lemma}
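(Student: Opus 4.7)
The plan is to exhibit the perturbation $s$ explicitly as the pullback of the per-entry rounding error through $A^T$, taking advantage of the identity $AA^T = I$ that comes from orthonormality of the rows of $A$.

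First I would set $E \eqdef A - A'$ and define $s \eqdef A^T E v$. Because each row of $A$ is a unit vector in $\R^n$, every entry of $A$ lies in $[-1,1]$, so rounding to $b$ bits per entry yields $|E_{ij}| \le 2^{-b}$ for all $i,j$ (and in particular $|A_{ij}|\le 1$ as well). The verification that this choice of $s$ satisfies the required identity is then a one-line calculation:
$$A(v - s) = Av - AA^T E v = Av - Ev = Av - (A - A')v = A'v.$$

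The remaining task is to bound $\|s\|_1$. Here I would expand coordinate-wise: for each $i\in[n]$,
$$|s_i| = \left|\sum_{j=1}^m \sum_{k=1}^n A_{ji}\, E_{jk}\, v_k\right| \le \sum_{j,k} |A_{ji}|\,|E_{jk}|\,|v_k|,$$
and then sum over $i$ and use the entrywise bounds $|E_{jk}|\le 2^{-b}$ and $|A_{ji}|\le 1$ to get
$$\|s\|_1 \le 2^{-b}\|v\|_1 \sum_{i,j} |A_{ji}| \le 2^{-b}\cdot mn \cdot \|v\|_1 \le n^2\, 2^{-b}\|v\|_1,$$
where the last step uses $m\le n$ (a matrix with orthonormal rows must have at most as many rows as columns).

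This is essentially a linear-algebra exercise, so I do not anticipate a real obstacle. The only point requiring a moment's care is the observation that the orthonormality of the rows of $A$ simultaneously supplies (a) the identity $AA^T = I$ needed to invert $A$ on $Ev$ and (b) the entrywise bound $|A_{ij}|\le 1$ that drives the crude estimate $\sum_{i,j}|A_{ji}|\le mn$. One could tighten the constant using $\|A\|_F^2=m$ and Cauchy--Schwarz, but the bound $n^2 2^{-b}\|v\|_1$ is already what is needed for the downstream communication-complexity argument.
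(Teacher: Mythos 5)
Your proposal is correct: the paper itself states this lemma without proof, importing it from \cite{dipw10}, and your argument is essentially the standard one used there — take $s = A^T(A-A')v$, use $AA^T = I$ to get $A(v-s) = A'v$, and bound $\|s\|_1$ by the crude entrywise estimates $|A_{ji}|\le 1$, $|(A-A')_{jk}|\le 2^{-b}$, and $m\le n$. No gaps; nothing further is needed.
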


\begin{theorem}
Any matrix $A$ which allows $\ell_1/\ell_1$-recovery with the $k$-tail
guarantee with error $\eps$ satisfies $m = \Omega((k/\eps) \log(\eps n/k))$.
\end{theorem}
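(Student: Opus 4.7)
The plan is to reduce from the one-way deterministic communication complexity of $EQ$ on inputs of length $r = \Theta((k/\eps)\log n \log(\eps n/k))$. On input $x \in \{0,1\}^r$, Alice constructs a vector $z_x \in \R^n$ and sends the rounded sketch $A'z_x$ to Bob; using $y$ and the recovery procedure $Out$, Bob then decides $EQ(x,y)$. Since each of the $m$ entries of $A'z_x$ requires only $b = O(\log n)$ bits, Alice's message has length $O(m \log n)$. Combined with $D^{1\text{-}way}(EQ) = r$ on $r$-bit inputs, this gives $m \log n = \Omega((k/\eps)\log n \log(\eps n/k))$, hence $m = \Omega((k/\eps)\log(\eps n/k))$.

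As in Theorem~\ref{thm:ha}, I would assume WLOG that the rows of $A$ are orthonormal, and choose the precision $b$ in Lemma~\ref{lem:round} large enough that the resulting error vector $s$ (with $\|s\|_1 \le n^2 2^{-b}\|z_x\|_1$) is negligible compared to the smallest scale that will appear in our construction. The encoding is hierarchical: partition $x$ into $L = \Theta(\log(\eps n/k))$ blocks of $\Theta((k/\eps)\log n)$ bits each. The $t$-th block is parsed as an ordered $(k/\eps)$-tuple of coordinates in $[n]$, yielding an indicator vector $u_t^x \in \{0,1\}^n$, and Alice forms the hierarchical vector $z_x = \sum_{t=1}^L \alpha_t u_t^x$ for a geometrically decaying sequence $\alpha_1 > \cdots > \alpha_L > 0$ whose ratio $\alpha_{t+1}/\alpha_t$ I will tune below. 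Bob constructs $z_y$ symmetrically from $y$, computes $A'z_x - A'z_y$ (which equals $A(z_x - z_y - s')$ by linearity and Lemma~\ref{lem:round}, with $\|s'\|_1$ negligible), and applies $Out$ to obtain $w'$ satisfying $\|w' - (z_x - z_y)\|_1 \le (1+\eps)\|(z_x - z_y)_{tail(k)}\|_1$.

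Bob then decides $EQ(x,y)$ by thresholding $\|w'\|_1$. When $x = y$, $z_x - z_y = 0$ and $w' \approx 0$ up to the rounding slack. When $x \ne y$, letting $t^*$ be the smallest index of a differing block, the head of $z_x - z_y$ contains entries of magnitude $\alpha_{t^*}$, while scales $> t^*$ contribute a tail whose $\ell_1$ mass is geometrically dominated by $\alpha_{t^*}$; the $(1+\eps)$-approximation then forces $\|w'\|_1$ to be bounded away from zero, letting Bob distinguish the two cases. The main obstacle is calibrating the geometric ratio $\alpha_{t+1}/\alpha_t$ (as a function of $k$ and $\eps$) so that, even in the worst case where only a single coordinate differs within the $t^*$-th block, the head-$k$ mass of $z_x - z_y$ still exceeds $\eps$ times its tail-$k$ mass; together with tuning $b$ via Lemma~\ref{lem:round} so that the additive error from $s, s'$ is dwarfed by $\alpha_L$, this gives a correct deterministic protocol for $EQ$ and thus the claimed lower bound.
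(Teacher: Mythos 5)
Your high-level framework is the same as the paper's (a reduction from one-way deterministic $EQ$, orthonormal rows WLOG, the rounding lemma, a multi-scale encoding, and Bob applying the $\ell_1/\ell_1$ guarantee to the sketch of a difference vector), but your specific encoding has a genuine gap. A block of $\Theta((k/\eps)\log n)$ bits cannot be injectively represented by a $0/1$ indicator vector of support $k/\eps$: passing from an ordered tuple to its indicator vector discards order and multiplicity, and information-theoretically such a vector carries only $\log\binom{n}{k/\eps} = \Theta((k/\eps)\log(\eps n/k))$ bits. As written, two different blocks can yield $u_t^x = u_t^y$, hence $z_x = z_y$ with $x \ne y$, and your Bob, who only thresholds $\|w'\|_1$, then answers ``equal'' incorrectly, so the protocol does not compute $EQ$. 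If you repair injectivity by indexing subsets, each of your $L = \Theta(\log(\eps n/k))$ blocks carries only $\Theta((k/\eps)\log(\eps n/k))$ bits, so $r$ shrinks to $\Theta((k/\eps)\log^2(\eps n/k))$ and the bound $\Omega(r/\log n)$ you obtain is weaker than claimed whenever $\log(\eps n/k) = o(\log n)$, e.g.\ when $k/\eps = n^{1-o(1)}$.

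The calibration you defer is also not a routine detail; it is exactly where the parameters force themselves. With per-scale support $k/\eps$, the worst case (top differing scale contributing one or two coordinates, the next scale differing in about $2k/\eps$ coordinates) forces $\alpha_{t+1}/\alpha_t \lesssim 1/k$: the head can capture only $k$ of the roughly $2k/\eps$ next-scale coordinates, i.e.\ an $\eps/2$ fraction of that scale's $\ell_1$ mass, which cannot by itself beat $\eps$ times the tail, so the top-scale entry must dominate the entire next scale's mass. With $L$ levels of ratio about $1/k$ the dynamic range is $k^{\Theta(L)}$, so the precision in Lemma~\ref{lem:round} must be $b = \Theta(\log k\,\log(\eps n/k) + \log n)$ rather than $O(\log n)$, and the message length $mb$ again degrades the final bound in part of the parameter range. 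The paper sidesteps both issues by swapping the roles of the two factors of $r$: it uses $\log n$ levels with weights $2^i$ (ratio $2$, dynamic range $\poly(n)$, so $b = O(\log n)$ suffices), and at each level injectively encodes $\Theta((k/\eps)\log(\eps n/k))$ bits as a vector with exactly $k/(c\eps)$ entries equal to $c\eps/k$; the constant $c$ is what makes the head-$k$ argument (an iterative heaviest-coordinate extraction showing $\|(u-v)_{head(k)}\|_1 > 2\eps\|u-v\|_1$) go through at ratio $2$, and Bob then only needs to test whether $A'(u-v) = 0$ rather than threshold a recovered vector. Adopting that allocation essentially reproduces the paper's proof.
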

\begin{proof}
Let $S$ be the set of all strings in $\{0,c\eps/k\}^n$ containing exactly $k/(c\eps)$
entries equal to $c\eps/k$, for an absolute constant $c > 0$ specified below. 
Observe that $\log |S| = \Theta((k/\eps) \log(\eps n/k))$. 

In the $EQ(x,y)$ problem, Alice is given a string $x$ of length $r =
\log n\cdot \log|S|$. Alice splits $x$
into $\log n$ contiguous chunks $x^1, \ldots, x^{\log n}$, each containing $r/\log n$
bits. She uses $x^i$ as an index to choose an element of $S$. She sets
$$u = \sum_{i = 1}^{\log n} 2^i x^i,$$
and transmits $A'u$ to Bob.

Bob is given a string $y$ of length $r$ in the $EQ(x,y)$ problem. He performs the same
procedure as Alice, namely, he splits $y$ into $\log n$ contiguous chunks $y^1, \ldots, y^{\log n}$, 
each containing $r/\log n$ bits. He uses $y^i$ as an index to choose an element of $S$. He sets 
$$v = \sum_{i=1}^{\log n} 2^i y^i.$$
Given $A'u$, he outputs $A'(u-v)$, which by applying Lemma \ref{lem:round} once to $Au$ and once 
to $Av$, is equal to $A(u-v-s)$ 
for an $s$ with $\|s\|_1 \leq n^2 2^{-b}(\|u\|_1 + \|v\|_1) \leq 1/n$, where the last inequality
follows for sufficiently large $b = O(\log n)$.  
If $A'(u-v) = 0$, he outputs that $x$ and $y$ are equal, 
otherwise he outputs that $x$ and $y$ are not equal. 

Observe that if $x = y$, then $u = v$, and so Bob outputs the correct answer. Next, we
consider $x \neq y$, and show that $A'(u-v) \neq 0$. To do this, it
suffices to show that
$\|(u-v-s)_{head(k)}\|_1 > \eps\|u-v-s\|_1$,
as then
$Out(A(u-v-s))$ could not output
$0$, which would also mean that $A'(u-v) \neq 0$. 

To show that $\|(u-v-s)_{head(k)}\|_1 > \eps \|u-v-s\|_1$, first observe that $\|s\|_1 \leq 1/n$,
so by the triangle inequality, it is enough to show that $\|(u-v)_{head(k)}\|_1 > 2\eps\|u-v\|_1$. 

Let $z^1 = u-v$. 
Let $i \in [\log n]$ be the largest index of a chunk for which $x^{i} \neq y^{i}$, and let $j_1$ be
such that $|z^1_{j_1}| = \|z^1\|_{\infty}$. Then $|z^1_{j_1}| = c \eps \cdot 2^{i}/k$, while 
$$\|z^1\|_1 \leq 2 \cdot 2 + 2 \cdot 4 + 2 \cdot 8 + \cdots + 2 \cdot 2^i < 2 \cdot 2^{i+1} =
2^{i+2}.$$
Let $z^2$ be $z^1$ with coordinate $j_1$ removed. 
Repeating this argument on $z^2$, we again find a coordinate 
$j_2$ with $|z^{2}_{j_2}| \geq \frac{c\eps}{4k} \cdot \|z^2\|_1$. It follows by induction that after $k$
steps, and for $\eps > 0$ less than an absolute constant $\eps_0 > 0$,
$$\|(u-v)_{tail(k)}\|_1 \leq \left (1 - \frac{c\eps}{4k} \right )^k \|u-v\|_1 
\leq \left (1-c\eps \right ) \|u-v\|_1,$$
and so 
$$\|(u-v)_{head(k)}\|_1 > c \eps \|u-v\|_1.$$
Setting $c = 2$, we have that $\|(u-v)_{head(k)}\|_1 > 2\eps\|u-v\|_1$, as desired. 

Finally, observe the communication of this protocol is the number of rows of $A$ times $O(\log n)$,
since this is the number of bits required to specify $m(x) = A'u$. It follows by the communication
lower bound for $EQ$, that the number of rows of $A$ is $\Omega(r/\log n) = \Omega((k/\eps) \log(\eps n/k))$. This proves our theorem.
\end{proof}

\section{Deterministic Norm Estimation and the Gelfand Width}
\begin{theorem}
For $1 \le p < q \le \infty$, let $m$ be the minimum number such that there is an $n-m$ dimensional subspace $S$ of $\mathbb{R}^n$ satisfying $\sup_{v\in S} \frac{\norm{v}_q}{\norm{v}_p} \le \eps$. Then there is an $m\times n$ matrix $A$ and associated output procedure $Out$ which for any $x\in \mathbb{R}^n$, given $Ax$, outputs an estimate of $\norm{v}_q$ with additive error at most $\eps \norm{v}_p$. Moreover, any matrix $A$ with fewer rows will fail to perform the same task.
\end{theorem}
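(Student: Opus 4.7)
The plan is to prove both directions via the standard duality between optimal measurement matrices and Gelfand-width-optimal subspaces.

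For the upper bound, let $S \subseteq \mathbb{R}^n$ be an $(n-m)$-dimensional subspace witnessing the definition, i.e.\ $\|v\|_q \le \eps \|v\|_p$ for every $v \in S$. Choose $A \in \mathbb{R}^{m \times n}$ to be any matrix whose kernel is exactly $S$ (such an $A$ exists since $\dim S = n-m$ and $A$ has $m$ rows). Given $Ax$, the output procedure solves the convex program
\[
    y \;=\; \mathop{\mathrm{argmin}}_{z \,:\, Az \,=\, Ax} \|z\|_p
\]
and returns $\alpha = \|y\|_q$. To analyze this, note that $x - y \in \ker(A) = S$, so by the defining property of $S$ we have $\|x - y\|_q \le \eps \|x - y\|_p$. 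Since $y$ minimizes $\ell_p$ norm over the affine subspace $\{z : Az = Ax\}$ which contains $x$, we have $\|y\|_p \le \|x\|_p$, and so by the triangle inequality $\|x - y\|_p \le 2\|x\|_p$. Thus $|\|x\|_q - \|y\|_q| \le \|x-y\|_q \le 2\eps \|x\|_p$. The factor of $2$ can be absorbed by rescaling $\eps$ in the definition of $m$, which only loses a constant factor in the number of measurements.

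For the lower bound, suppose some matrix $A'$ with $m' < m$ rows, together with a deterministic $Out'$, achieves the guarantee $|Out'(A'x) - \|x\|_q| \le \eps \|x\|_p$ for all $x$. Applying this with $x = 0$ forces $Out'(0) = 0$. Now for any $v \in \ker(A')$ we have $A'v = A' \cdot 0 = 0$, so by determinism $Out'(A'v) = 0$, and then the correctness guarantee at $v$ gives $\|v\|_q \le \eps \|v\|_p$. Hence $\ker(A')$ is a subspace of dimension at least $n - m' > n - m$ on which $\|v\|_q/\|v\|_p \le \eps$ uniformly, contradicting the minimality of $m$ in the definition.

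The main obstacle I anticipate is just the factor-of-$2$ slack in the upper bound; one should state the theorem with an implicit constant or redefine $m$ using $\eps/2$ so that the two halves match exactly. Everything else is routine: the upper bound is a clean $\ell_p$-minimization decoder, and the lower bound follows from the determinism of $Out$ applied to the pair $(0, v)$ for $v$ ranging over $\ker(A')$, which is the standard trick that makes Gelfand widths tight for deterministic linear sketches.
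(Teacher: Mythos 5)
Your lower bound is essentially the paper's: forcing $Out'(0)=0$ and then applying correctness to each $v\in\ker(A')$ shows the kernel is a large subspace on which $\norm{v}_q\le\eps\norm{v}_p$, contradicting the minimality of $m$; the paper phrases this as the existence of a single bad $v$ in the kernel, but the content is identical. The upper bound, however, has a genuine gap relative to the statement being proved. The theorem asserts that with exactly $m$ rows (the $m$ defined with parameter $\eps$) one can estimate $\norm{x}_q$ with additive error $\eps\norm{x}_p$, and that fewer rows fail --- a tight characterization with the \emph{same} $\eps$ on both sides. Your $\ell_p$-minimization decoder $y=\mathrm{argmin}_{z:Az=Ax}\norm{z}_p$ only yields error $2\eps\norm{x}_p$, and your proposed repair (redefine $m$ using $\eps/2$) changes the statement: the upper bound would then use $m(\eps/2)$ measurements while the lower bound only rules out fewer than $m(\eps)$, so the matching claimed by the theorem is lost. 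The factor of $2$ is not an artifact that must be tolerated; it is an artifact of your choice of decoder.

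The paper avoids it by not decoding a vector at all. Given the sketch $z$, it outputs
\[
\alpha \;=\; \min_{x':\,Ax'=z}\bigl(\norm{x'}_q+\eps\norm{x'}_p\bigr).
\]
For the true $x$ one has $\alpha\le\norm{x}_q+\eps\norm{x}_p$ by feasibility of $x$, and if $y$ attains the minimum then $x-y\in\ker(A)$ gives $\norm{x}_q-\norm{y}_q\le\norm{x-y}_q\le\eps\norm{x-y}_p\le\eps(\norm{x}_p+\norm{y}_p)$, i.e.\ $\alpha=\norm{y}_q+\eps\norm{y}_p\ge\norm{x}_q-\eps\norm{x}_p$. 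So $\alpha$ lies in $[\norm{x}_q-\eps\norm{x}_p,\,\norm{x}_q+\eps\norm{x}_p]$ for every $x$ consistent with the sketch, with no constant loss and the same $m$. (The paper then spends most of the proof making this minimization algorithmic --- binary search plus the ellipsoid method with an explicit separation oracle for the set $\{y:\norm{y}_q+\eps\norm{y}_p\le M\}$; your decoder is also a convex program, so this aspect is a minor difference, but it is worth knowing the statement is meant to come with an effective $Out$.) With your current argument you have proved a weaker theorem than the one stated; replacing your decoder by the consistent-estimate minimization above closes the gap.
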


\begin{proof}
Consider a matrix $A$ whose kernel is such a subspace. For any sketch $z$, we need to return a number in the range $[\norm{x}_q-\eps\norm{x}_p, \norm{x}_q + \eps\norm{x}_p]$ for any $x$ satisfying $Ax=z$. Assume for contradiction that it is not possible. Then there exist $x$ and $y$ such that $Ax=Ay$ but $\norm{x}_q-\eps\norm{x}_p > \norm{y}_q + \eps\norm{y}_p$. However, since $x-y$ is in the kernel of A,
$$\norm{x}_q-\norm{y}_q \le \norm{x-y}_q \le \eps\norm{x-y}_p \le \eps(\norm{x}_p+\norm{y}_p)$$

Thus, we have a contradiction. The above argument also shows that given the sketch $z$, the output procedure can return $\min_{x:Ax=z} \norm{x}_q + \eps\norm{x}_p$. This is a convex optimization problem that can be solved using the ellipsoid algorithm. Below we give the details of the algorithm for finding a $1+\eps$ approximation of OPT.

Let $y = A^T(AA^T)^{-1}z$. Then $Ay = z = Ax$, $y$ is the projection of $x$ on the space spanned by the rows of $A$, and thus $y$ is the vector of minimum $\ell_2$ norm satisfying $Ay=z$. We have for any $x$ satisfying $Ax=z$,
\begin{equation}
n^{-1/2}\norm{y}_2\le n^{-1/2}\norm{x}_2 \le \norm{x}_q \le OPT
=\min_{x: Ax=z} \norm{x}_q+\eps\norm{x}_p \le
\norm{y}_q+\eps\norm{y}_p\le (1+\eps)\sqrt{n}\norm{y}_2\EquationName{ellipsoid}
\end{equation}

The value $\norm{y}_2$ can be computed from the sketch $z$, and we use
this value
to find OPT using binary search. Specifically, in each step we use
the ellipsoid algorithm to solve the feasibility problem
$\norm{x}_q+\eps\norm{x}_p \le M$ on the affine subspace $Ax = z$. 
Recall that when solving feasibility problems, the ellipsoid algorithm
takes time polynomial in the dimension,
the running time of a separation oracle, and the logarithm of the
ratio of volumes of an initial ellipsoid containing a feasible point
and the volume of the intersection of that ellipsoid with the feasible set.
Let
$x^*$ be the optimal solution of the minimization problem. If $M \ge
(1+\eps)OPT$ then by the triangle inequality every point in the $\ell_2$
ball centered at $x^*$ of radius $\frac{\eps n^{-1}\|y\|_2}{1+\eps}$
is feasible. Furthermore, by \Equation{ellipsoid} the set of feasible
solutions is contained in the intersection of the $\ell_2$ ball about the
origin of radius $(1+\eps)n\|y\|_2$ and the affine subspace (or
equivalently, the $\ell_2$ ball about $y$ of radius $\sqrt{(1+\eps)^2
  n^2-1}\|y\|_2$ and the affine subspace). Thus, the ellipsoid algorithm
runs in time polynomial in $n$ and $\log(1/\eps)$ assuming a
polynomial time separation oracle.

Now we describe the separation oracle. Consider a point $x$ such that
$\norm{x}_q+\eps\norm{x}_p > M$. We want to find a hyperplane
separating $x$ and $\{y| \|y\|_q + \eps\|y\|_p \le M\}$. Without loss
of generality assume
that $x_i \ge 0$ for all $i$. Define $f_{x, p, i}$ as follows:
$$f_{x, p, i} = \begin{cases}
\|x\|_p^{1-p}x_i^{p-1} &\mbox{if } p < \infty\\
1/k &\mbox{if } p = \infty \mbox{ and } x_i = \max_j x_j \mbox{ and } k = |\{t|x_t = \max_j x_j\}|\\
0 &\mbox{if } p = \infty \mbox{ and } x_i < \max_j x_j
\end{cases} .$$

The hyperplane we consider is $h\cdot y = h\cdot x$ where $h_i = f_{x,q,i} +\eps f_{x, p, i}$.

\begin{lemma}
  If $h\cdot y \ge h\cdot x$ then $\|y\|_q + \eps\|y\|_p \ge \|x\|_q + \eps\|y\|_p$.
\end{lemma}
\begin{proof}
  For any $y$, consider $y'$ such that $y'_i = |y_i|$. We have $\|y'\|_q+\eps\|y'\|_p = \|y\|_q + \eps\|y\|_p$ and $h\cdot y' \ge h\cdot y$. Thus, we only need to prove the claim for $y$ such that $y_i \ge 0~\forall i$.

  If $p < \infty$ then by H\"{o}lder's inequality,

$$\|y\|_p\cdot \|x\|_p^{p-1} = \|y\|_p\cdot
\|(x_i^{p-1})_i\|_{p/(p-1)} \ge \sum_i y_i x_i^{p-1} .$$

  If $p = \infty$ then $\|y\|_{\infty} \ge \sum_{i: x_i = \max_j x_j} y_i/k$.

  In either case, $\|y\|_p \ge \sum_i y_i f_{x, p, i}$, and the same
  inequality holds for $p$ replaced with $q$. Thus,
$$\|y\|_q + \eps\|y\|_p \ge y \cdot h \ge x\cdot h = \|x\|_q +
\eps\|x\|_p .$$
\end{proof}

By the above lemma, $h$ separates $x$ and the set of feasible solutions. This concludes the description of the algorithm.

For the lower bound, consider a matrix $A$ with fewer than $m$ rows. Then in the kernel of $A$, there exists $v$ such that $\norm{v}_q > \eps\norm{v}_p$. Both $v$ and the zero vector give the same sketch (a zero vector). However, by the stated requirement, we need to output $0$ for the zero vector but some positive number for $v$. Thus, no matrix $A$ with fewer than $m$ rows can solve the problem.
\end{proof}

The subspace $S$ of highest dimension of $\mathbb{R}^n$ satisfying $\sup_{v\in S} \frac{\norm{v}_q}{\norm{v}_p} \le \eps$ is related to the Gelfand width, a well-studied notion in functional analysis.
\begin{definition}
Fix $p < q$. The  Gelfand width of order $m$ of $\ell_p$ and $\ell_q$ unit balls in $\mathbb{R}^n$ is defined as 
$$\inf_{\textnormal{subspace }A:\codim(A)=m} \sup_{v\in A} \frac{\norm{v}_q}{\norm{v}_p}$$
\end{definition}

Using known bounds for the Gelfand width for $p=1$ and $q=2$, we get the following corollary.

\begin{corollary}
Assume that $1/\eps^2 < n/2$. There is an $m \times n$ matrix $A$ and associated output procedure $Out$ which for any $x \in \mathbb{R}^n$, given $Ax$, outputs an estimate $e$ such that $\norm{x}_2-\eps\norm{x}_1 \le e \le \norm{x}_2+\eps\norm{x}_1$. Here $m = O(\eps^{-2}\log (\eps^2 n))$ and this bound for $m$ is tight.
\end{corollary}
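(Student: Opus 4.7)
The plan is to apply the preceding theorem with $p=1$ and $q=2$, which immediately reduces the corollary to determining the smallest codimension $m$ for which there exists a subspace $S\subseteq\R^n$ satisfying $\sup_{v\in S}\|v\|_2/\|v\|_1\le\eps$. This quantity is exactly the inverse function of the Gelfand width of $B_1^n$ inside $\ell_2^n$: we want the smallest $m$ such that the Gelfand width of order $m$ (as defined just after the theorem) drops below $\eps$.

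Both halves of the corollary then follow from the classical tight estimate of Kashin (upper bound, via a random Gaussian or $\pm1$ subspace) together with the matching lower bound of Garnaev--Gluskin, which give, for all $m\le n/2$,
$$
\inf_{S\subseteq\R^n,\ \mathrm{codim}(S)=m}\ \sup_{v\in S}\frac{\|v\|_2}{\|v\|_1} \;=\; \Theta\!\left(\sqrt{\frac{\log(n/m)}{m}}\right).
$$
Setting this expression equal to $\eps$ and solving for $m$ yields $m=\Theta(\eps^{-2}\log(\eps^2 n))$; the hypothesis $1/\eps^2 < n/2$ is exactly what places us in the regime $m\le n/2$ where the Gelfand width estimate is meaningful. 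The randomized construction from Kashin's theorem produces a matrix $A$ whose kernel is such a subspace with high probability, and the recovery procedure $Out$ is the deterministic convex program (solved via the ellipsoid method as described in the preceding theorem).

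The one small computation is verifying self-consistency of $m\asymp\eps^{-2}\log(\eps^2 n)$ in the equation $\eps^2\asymp\log(n/m)/m$: with this choice, $n/m = \eps^2 n/\log(\eps^2 n)$, so $\log(n/m) = \log(\eps^2 n)-\log\log(\eps^2 n) = \Theta(\log(\eps^2 n))$, as required. Since the Kashin/Garnaev--Gluskin bound is tight up to constants in both directions, both the upper bound on $m$ and the matching lower bound (from the ``Moreover'' clause of the previous theorem) follow simultaneously. There is no essential obstacle here: the corollary is a direct plug-in of a standard bound from asymptotic geometric analysis into the black-box theorem just established, with the only subtlety being a routine log--log consistency check and the observation that Kashin's construction is randomized but works for all $x$ with high probability.
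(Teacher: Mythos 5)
Your proposal is correct and follows essentially the same route as the paper: apply the preceding theorem with $p=1$, $q=2$ and plug in the tight Gelfand-width estimate $\Theta\bigl(\sqrt{(1+\log(n/m))/m}\bigr)$ of Kashin/Garnaev--Gluskin (the paper cites Foucart et al.\ and Garnaev--Gluskin), solving for $m$ to get $\Theta(\eps^{-2}\log(\eps^2 n))$ in both directions. Your extra log--log consistency check and the remark that the construction is randomized but works for all $x$ simultaneously are fine details the paper leaves implicit.
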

\begin{proof}
The corollary follows from the following bound on the Gelfand width by 
Foucart et al.~\cite{FPRU10} and Garnaev and Gluskin~\cite{GG84}:
$$\inf_{\textnormal{subspace }A:\codim(A)=m} \sup_{v\in A} \frac{\norm{v}_2}{\norm{v}_1} = \Theta\left(\sqrt{\frac{1+\log(n/m)}{m}}\right)$$
\end{proof}

\section*{Acknowledgments}
We thank Raghu Meka for answering several questions about almost
$k$-wise independent sample spaces. We thank an anonymous reviewer for
pointing out the connection between incoherent matrices and
$\eps$-biased spaces, which are used to construct almost $k$-wise
independent sample spaces.

\bibliographystyle{plain}
\bibliography{stoc}

\begin{thebibliography}{10}

\bibitem{a03}
Dimitris Achlioptas.
\newblock Database-friendly random projections: {Johnson-Lindenstrauss} with
  binary coins.
\newblock {\em J. Comput. Syst. Sci.}, 66(4):671--687, 2003.

\bibitem{AC09}
Nir Ailon and Bernard Chazelle.
\newblock The fast {Johnson-Lindenstrauss} transform and approximate nearest
  neighbors.
\newblock {\em SIAM J. Comput.}, 39(1):302--322, 2009.

\bibitem{al09}
Nir Ailon and Edo Liberty.
\newblock Fast dimension reduction using {Rademacher} series on dual {BCH}
  codes.
\newblock {\em Discrete {\&} Computational Geometry}, 42(4):615--630, 2009.

\bibitem{AL11}
Nir Ailon and Edo Liberty.
\newblock Almost optimal unrestricted fast {Johnson-Lindenstrauss} transform.
\newblock In {\em Proceedings of the 22nd Annual ACM-SIAM Symposium on Discrete
  Algorithms (SODA)}, pages 185--191, 2011.

\bibitem{Alon03}
Noga Alon.
\newblock Problems and results in extremal combinatorics - {I}.
\newblock {\em Discrete Mathematics}, 273(1-3):31--53, 2003.

\bibitem{a09}
Noga Alon.
\newblock Perturbed identity matrices have high rank: Proof and applications.
\newblock {\em Combinatorics, Probability {\&} Computing}, 18(1-2):3--15, 2009.

\bibitem{AlonGHP92}
Noga Alon, Oded Goldreich, Johan H{\aa}stad, and Ren{\'e} Peralta.
\newblock Simple construction of almost k-wise independent random variables.
\newblock {\em Random Struct. Algorithms}, 3(3):289--304, 1992.

\bibitem{AMS99}
Noga Alon, Yossi Matias, and Mario Szegedy.
\newblock {The Space Complexity of Approximating the Frequency Moments}.
\newblock {\em JCSS}, 58(1):137--147, 1999.

\bibitem{dipw10}
Khanh~Do Ba, Piotr Indyk, Eric Price, and David~P. Woodruff.
\newblock Lower bounds for sparse recovery.
\newblock In {\em SODA}, pages 1190--1197, 2010.

\bibitem{bddw06}
Richard Baraniuk, Mark~A. Davenport, Ronald DeVore, and Michael Wakin.
\newblock A simple proof of the {Restricted} {Isometry} {Property}.
\newblock {\em Constructive Approximation}, 28(3):253--263, 2008.

\bibitem{b2}
Daniel Barbar\'{a}, Ningning Wu, and Sushil Jajodia.
\newblock Detecting novel network intrusions using {Bayes} estimators.
\newblock In {\em Proceedings of the 1st SIAM International Conference on Data
  Mining}, 2001.

\bibitem{Ben-AroyaT09}
Avraham Ben-Aroya and Amnon Ta-Shma.
\newblock Constructing small-bias sets from algebraic-geometric codes.
\newblock In {\em FOCS}, pages 191--197, 2009.

\bibitem{CRT06}
Emmanuel Cand\`{e}s, Justin Romberg, and Terence Tao.
\newblock Robust uncertainty principles: Exact signal reconstruction from
  highly incomplete frequency information.
\newblock {\em IEEE Trans. Information Theory}, 52(2):489--509, 2006.

\bibitem{CCF02}
Moses Charikar, Kevin Chen, and Martin Farach-Colton.
\newblock Finding frequent items in data streams.
\newblock {\em Theor. Comput. Sci.}, 312(1):3--15, 2004.

\bibitem{cdd09}
Albert Cohen, Wolfgang Dahmen, and Ronald~A. DeVore.
\newblock Compressed sensing and best k-term approximation.
\newblock {\em J. Amer. Math. Soc.}, 22:211--231, 2009.

\bibitem{CM05}
Graham Cormode and S.~Muthukrishnan.
\newblock An improved data stream summary: the count-min sketch and its
  applications.
\newblock {\em J. Algorithms}, 55(1):58--75, 2005.

\bibitem{CM05a}
Graham Cormode and S.~Muthukrishnan.
\newblock What's hot and what's not: tracking most frequent items dynamically.
\newblock {\em ACM Trans. Database Syst.}, 30(1):249--278, 2005.

\bibitem{dlm02}
Erik~D. Demaine, Alejandro L{\'o}pez-Ortiz, and J.~Ian Munro.
\newblock Frequency estimation of {Internet} packet streams with limited space.
\newblock In {\em ESA}, pages 348--360, 2002.

\bibitem{DH01}
David~L. Donoho and Xiaoming Huo.
\newblock Uncertainty principles and ideal atomic decomposition.
\newblock {\em IEEE Trans. Inform. Th.}, 47:2558--2567, 2001.

\bibitem{FPRU10}
Simon Foucart, Alain Pajor, Holger Rauhut, and Tino Ullrich.
\newblock The {Gelfand} widths of $\ell_p$-balls for $0 < p \le 1$.
\newblock {\em Journal of Complexity}, 26(6):629--640, 2010.

\bibitem{g08lb}
Sumit Ganguly.
\newblock Lower bounds on frequency estimation of data streams.
\newblock In {\em CSR}, pages 204--215, 2008.
\newblock Full version at
  http://www.cse.iitk.ac.in/users/sganguly/csr-full.pdf.

\bibitem{g09b}
Sumit Ganguly.
\newblock Deterministically estimating data stream frequencies.
\newblock In {\em COCOA}, pages 301--312, 2009.

\bibitem{gm07}
Sumit Ganguly and Anirban Majumder.
\newblock {CR}-precis: A deterministic summary structure for update data
  streams.
\newblock In {\em ESCAPE}, pages 48--59, 2007.

\bibitem{GG84}
Andrej~Y. Garnaev and Efim~D. Gluskin.
\newblock On the widths of the {Euclidean} ball.
\newblock {\em Soviet Mathematics Doklady}, 30:200--203, 1984.

\bibitem{g01}
Anna~C. Gilbert, Yannis Kotidis, S.~Muthukrishnan, and Martin~J. Strauss.
\newblock Quicksand: Quick summary and analysis of network data.
\newblock DIMACS Technical Report 2001-43, 2001.

\bibitem{GilbertMS03}
Anna~C. Gilbert, S.~Muthukrishnan, and Martin Strauss.
\newblock Approximation of functions over redundant dictionaries using
  coherence.
\newblock In {\em SODA}, pages 243--252, 2003.

\bibitem{gstv07}
Anna~C. Gilbert, Martin~J. Strauss, Joel~A. Tropp, and Roman Vershynin.
\newblock One sketch for all: fast algorithms for compressed sensing.
\newblock In {\em STOC}, pages 237--246, 2007.

\bibitem{Gluskin82}
Efim~D. Gluskin.
\newblock On some finite-dimensional problems in the theory of widths.
\newblock {\em Vestn. Leningr. Univ. Math.}, 14:163--170, 1982.

\bibitem{IR08}
Piotr Indyk and Milan Ru\v{z}i\'{c}.
\newblock Near-optimal sparse recovery in the {$L_1$} norm.
\newblock In {\em FOCS}, pages 199--207, 2008.

\bibitem{JL84}
William~B. Johnson and Joram Lindenstrauss.
\newblock Extensions of {Lipschitz} mappings into a {Hilbert} space.
\newblock {\em Contemporary Mathematics}, 26:189--206, 1984.

\bibitem{jst11}
Hossein Jowhari, Mert Saglam, and G{\'a}bor Tardos.
\newblock Tight bounds for {$L_p$} samplers, finding duplicates in streams, and
  related problems.
\newblock In {\em PODS}, pages 49--58, 2011.

\bibitem{KN12}
Daniel~M. Kane and Jelani Nelson.
\newblock Sparser {Johnson-Lindenstrauss} transforms.
\newblock In {\em SODA}, pages 1195--1206, 2012.

\bibitem{ksp03}
Richard~M. Karp, Scott Shenker, and Christos~H. Papadimitriou.
\newblock A simple algorithm for finding frequent elements in streams and bags.
\newblock {\em ACM Trans. Database Syst.}, 28:51--55, 2003.

\bibitem{KS64}
William~H. Kautz and Richard~C. Singleton.
\newblock Nonrandom binary superimposed codes.
\newblock {\em IEEE Trans. Inf. Theory}, 10:363--377, 1964.

\bibitem{KW11}
Felix Krahmer and Rachel Ward.
\newblock New and improved {J}ohnson-{L}indenstrauss embeddings via the
  {R}estricted {I}sometry {P}roperty.
\newblock {\em SIAM J. Math. Anal.}, 43(3):1269--1281, 2011.

\bibitem{KKLS94}
Hari Krishna, Bal Krishna, Kuo-Yu Lin, and Jenn-Dong Sun.
\newblock {\em Computational Number Theory and Digital Signal Processing: Fast
  Algorithms and Error Control Techniques}.
\newblock CRC, Boca Raton, FL, 1994.

\bibitem{kn97}
Eyal Kushilevitz and Noam Nisan.
\newblock {\em Communication complexity}.
\newblock Cambridge University Press, 1997.

\bibitem{MZ93}
St\'{e}phane~G. Mallat and Zhifeng Zhang.
\newblock Matching pursuits with time-frequency dictionaries.
\newblock {\em IEEE Trans. Signal Process.}, 41(12):3397--3415, 1993.

\bibitem{Misra}
Jayadev Misra and David Gries.
\newblock Finding repeated elements.
\newblock {\em Sci. Comput. Program.}, 2(2):143--152, 1982.

\bibitem{NaorN93}
Joseph Naor and Moni Naor.
\newblock Small-bias probability spaces: Efficient constructions and
  applications.
\newblock {\em SIAM J. Comput.}, 22(4):838--856, 1993.

\bibitem{PW11}
Eric Price and David~P. Woodruff.
\newblock (1 + eps)-approximate sparse recovery.
\newblock In {\em FOCS}, pages 295--304, 2011.

\bibitem{RV08}
Mark Rudelson and Roman Vershynin.
\newblock On sparse reconstruction from {Fourier} and {Gaussian} measurements.
\newblock {\em Communications on Pure and Applied Mathematics}, 61:1025--1045,
  2008.

\bibitem{Sivakumar02}
D.~Sivakumar.
\newblock Algorithmic derandomization via complexity theory.
\newblock In {\em STOC}, pages 619--626, 2002.

\bibitem{SJJT86}
Michael~A. Soderstrand, W.~Kenneth Jenkins, Graham~A. Jullien, and Fred~J.
  Taylor.
\newblock {\em Residue Number System Arithmetic: Modern Applications in Digital
  Signal Processing}.
\newblock IEEE Press, New York, 1986.

\bibitem{GG99}
Joachim von~zur Gathen and J\"{u}rgen Gerhard.
\newblock {\em Modern Computer Algebra}.
\newblock Cambridge University Press, 1999.

\bibitem{WH66}
Richard~W. Watson and Charles~W. Hastings.
\newblock Self-checked computation using residue arithmetic.
\newblock {\em Proc. IEEE}, 4(12):1920--1931, 1966.

\end{thebibliography}



\end{document}